\documentclass[12pt]{article}
\usepackage{amsthm,amsmath,amsfonts,amssymb}
\usepackage{graphicx,psfrag,epsf}
\usepackage{enumerate}
\usepackage{natbib}

\newcommand{\blind}{0}

\addtolength{\oddsidemargin}{-.75in}%
\addtolength{\evensidemargin}{-.75in}%
\addtolength{\textwidth}{1.5in}%
\addtolength{\textheight}{1.3in}%
\addtolength{\topmargin}{-.8in}%


\RequirePackage[colorlinks,citecolor=blue,urlcolor=blue]{hyperref}
\RequirePackage{graphicx}
\usepackage[caption = true]{subfig}

\usepackage{xcolor}
\usepackage{bm}
\usepackage{enumerate}
\usepackage{tabularx}
\usepackage{float}
\usepackage{wasysym}

\newcommand{\Oxi}{\Omega_{x_i}(f_D)}

\newcommand{\methodsym}{\Omega\left(\tilde{f}_D\right)}
\newcommand{\meshy}{\Omega\left(\tilde{f}_D\right)}

\newcommand{\di}{d_{(i)}}
\newcommand{\diplus}{d_{(i)+1}}

\newcommand{\ej}{\boldsymbol{e}_j}

\newcommand{\method}{MBS}
\newcommand{\Method}{MBS}

\newcommand\bovermat[2]{%
	\makebox[0pt][l]{$\smash{\overbrace{\phantom{%
					\begin{matrix}#2\end{matrix}}}^{\text{#1}}}$}#2}

\newtheorem{theorem}{Theorem}[section]
\newtheorem{lemma}[theorem]{Lemma}

\begin{document}

	\def\spacingset#1{\renewcommand{\baselinestretch}%
		{#1}\small\normalsize} \spacingset{1}

	
	\if0\blind
	{
		\title{\bf Mesh-Based Solutions for Nonparametric Penalized Regression}
		\author{Brayan Ortiz \\
			Modeling \& Optimization, Amazon.com \\
			and \\
			Noah Simon\thanks{
				Noah Simon was supported to do this work by NIH grant DP5OD019820 from the office of the director.}\hspace{.2cm} \\
			Department of Biostatistics, University of Washington}
		\maketitle
	} \fi
	
	\if1\blind
	{
		\bigskip
		\bigskip
		\bigskip
		\begin{center}
			{\LARGE\bf Mesh-Based Solutions for Nonparametric Penalized Regression}
		\end{center}
		\medskip
	} \fi
	
	\bigskip
	\begin{abstract}
		It is often of interest to estimate regression functions non-parametrically. Penalized regression (PR) is one statistically-effective, well-studied solution to this problem. Unfortunately, in many cases, finding exact solutions to PR problems is computationally intractable. In this manuscript, we propose a mesh-based approximate solution (\method) for those scenarios. \method~transforms the complicated functional minimization of NPR, to a finite parameter, discrete convex minimization; and allows us to leverage the tools of modern convex optimization. We show applications of \method~in a number of explicit examples (including both uni- and multi-variate regression), and explore how the number of parameters must increase with our sample-size in order for \method~to maintain the rate-optimality of NPR. We also give an efficient algorithm to minimize the \method~objective while effectively leveraging the sparsity inherent in \method.
	\end{abstract}
	
	\noindent%
	{\it Keywords:}  convex optimization, locally adaptive regression splines, piecewise polynomial fitting, nonparametric regression, total variation, trend filtering
	
	\spacingset{1.45}

	\section{Introduction}
	
	In this paper we consider a statistical problem in which we measure a response $y_i$ and a covariate $x_i\in[a,b]$, on each of $i=1,\ldots,N$ observations. We assume a generative model of the form
	\[
	y_i = f^* \left(x_i \right) + w_i
	\]
	where $f^*$ is an unknown function from a known function class $\mathcal{F}$, and $w_i$ are iid errors with $\operatorname{E}\left[w_i\right] = 0$ and $\operatorname{var}{w_i} = \sigma^2 < \infty$. We are interested in estimating $f^*$ based on the observed data. One common approach for estimating $f^*$ is to use penalized regression (\cite{buhlmann2011statistics}, \cite{geer2000empirical}):
	\begin{equation}\label{eq:pen}
	\hat{f} = \operatorname{argmin}_{f\in\mathcal{F}}\frac{1}{n}\sum_{i=1}^n\left(y_i - f\left(x_{i}\right)\right)^2 + \lambda_n P\left(f\right)
	\end{equation}
	where $\lambda_N \geq 0$ is a tuning parameter and $P(\cdot)$ is a penalty function which penalizes ``complexity.'' Some common examples of penalized regression include: smoothing splines, lasso, fused lasso, trend filtering, locally adaptive regression splines and others (\cite{wahba1978smoothing}, \cite{tibshirani1996regression}, \cite{tibshirani2005fusedlasso}, \cite{kim2009ell_1}, \cite{tibshirani2014adaptive}, \cite{mammen1997}). 
	
	Solutions to penalized problems have good theoretical properties: with carefully chosen $\lambda_N$, these penalized-regression-based estimates often converge at minimax (or near minimax) rates (\cite{massart2007concentration},\cite{geer2000empirical}). Additionally,  when $P(\cdot)$ is convex, and $\mathcal{F}$ can be finitely parametrized, $\hat{f}$ can be solved efficiently (in polynomial time) in both theory and practice (\cite{wahba1975smoothing},\cite{wahba1978smoothing}). Even when $\mathcal{F}$ cannot be finitely parametrized, sometimes the solution for a specific set of observed data falls in a calculable finite dimensional subfamily, and so we can efficiently solve~\eqref{eq:pen}. For example, this is the case for smoothing splines and the fused lasso (\cite{wahba1978smoothing}, \cite{tibshirani2014adaptive}).
	
	We propose a computationally tractable framework for approximately solving~\eqref{eq:pen} when the true solution does not fall in a simple finite dimensional subfamily. In our framework, we alter the optimization problem in~\eqref{eq:pen} slightly. We select a mesh of $m$ knots over the domain of $x$ and use the fitted-values at those knots as our optimization parameters: We replace the penalty function with a finite-difference/Riemann approximation; and the fitted values at the data points are approximated by cleverly interpolating between fitted values at knots. We refer to the general approach as \method, or mesh based solution.  
	
	Formally, we first take an $m$-division $D$ of $[a,b]$:
	\[
	D: a=d_1 \le d_2 \le \ldots \le d_{m-1}\le d_m=b.
	\]
	such that all observed $x_i$'s are positioned within this mesh. Using $D$, we can formulate an approximation to our original problem~\eqref{eq:pen}:  
	\begin{equation}\label{eq:app}
	\tilde{f_D} = \operatorname{argmin}_{f_D\in\mathbb{R}^m}\frac{1}{n}\sum_{i=1}^n\left(y_i - \Oxi \right)^2 + \lambda_N P_D\left(f_D\right),
	\end{equation}
	where $P_D(f_D)$ is a approximation to $P(f)$ based on finite-differences/Riemann sums using our fitted values on the mesh, $f_D = \left(f\left(d_1\right),\ldots,f\left(d_m\right)\right)^\top$; and our interpolator $\Omega: \mathbb{R}^m\rightarrow \mathcal{F}$ with $\Oxi\equiv \Omega(f\left(d_1\right),\ldots,f\left(d_m\right); x_i)\in \mathcal{F}$ takes in fitted values on our mesh, and an $x_i$ (potentially not on the mesh), and calculates an interpolated fit at $x_i$. In Section 2, we describe a framework for translating a particular class of $P(f)$ into $P_D(f_D)$ and discuss piecewise polynomial interpolation schemes. At the end of Section 2, we briefly discuss similarities between univariate \method, trend filter, locally adaptive regression splines, and Whittaker smoothing \citep{whittaker1922new}.  
	
	We briefly detail multivariate \method~in Section 3. We describe an alternating direction method of multipliers (ADMM) solver for univariate \method~when $\ell=1$ (Section 4). Using the ADMM solver, we run a simulation study in Section 5 highlighting that for even a modest number of knots the approximation error induced by replacing~\eqref{eq:pen} by~\eqref{eq:app} is smaller than our statistical error. Indeed, as the distance between mesh points converges to $0$, the solution to our problem~\eqref{eq:app} converges to the solution to the original problem \eqref{eq:pen}. In our discussion, we provide theoretical results supporting our findings, as well as considerations on future work. 
	
\section{\method~Optimization} \label{method}

	In this section we discuss in detail our proposal for approximating \eqref{eq:pen}. For now we restrict ourselves such that
	\[
	P(f) = \int \left|f^{(r+1)}(d)\right|^\ell \partial d
	\] 
	for some integer $r\ge 0$ and $\ell>0$. These Sobolev-norm penalties are a fairly broad class, which include smoothing splines and total-variation penalties among others. Let us begin by recalling the $m$-division or mesh, $D$, of $[a,b]$:
	\[
	D: a=d_1\le d_2 \le \ldots \le d_{m-1}\le d_m=b.
	\]
	Let $\boldsymbol{\delta}=(\delta_1,\ldots,\delta_{m-1})$ denote the bin widths within the mesh, where $\delta_j=d_{j+1}-d_j$ for $j=1,\ldots,m-1$. Define $\delta_{max}$ as the max width within $D$ such that $\delta_m=\max\{d_{j+1}-d_{j}|0\le j < m\}$. Often, we specify $D$ as a regular or even mesh, where $\delta_j=d_{j+1}-d_j=\frac{b-a}{m}$ for $j=1,\ldots,m-1$. For regular $D$, $\delta_{max}=\frac{b-a}{m}$. The approximate problem we aim to solve is:
	\begin{equation}\label{eq:method}
	\tilde{f_D} = \operatorname{argmin}_{f_D\in\mathbb{R}^m}\frac{1}{n}\sum_{i=1}^n\left(y_i - \Oxi \right)^2 + \lambda_n P_D\left(f_D\right),
	\end{equation}
	where $P_D(f_D)$ is an approximation to $P$ calculated using only $d_1,\ldots,d_m$ and $f_D=\left(f\left(d_1\right),\ldots,f\left(d_m\right)\right)$. For notation, $f(d_j)=\left(f_D\right)_j$. When an observation $x_i\ne d_j$ (for all $j$), $f_D$ will not be defined and so $\Oxi$ approximates $f\left(x_i\right)$ using only $x_i$, $\left(d_1,\ldots,d_m\right)$ and $f_D$. We must choose $P_D$ and $\Oxi$ to best approximate $P$ and $f(x_i)$, respectively. 
	
	\subsection{Choosing $P_D(f_D)$:} We use finite-differences/Riemann sums to approximate $P$. This works in part because of the form we have assumed for $P$: $P(f) = \int \left|f^{(r+1)}(d)\right|^\ell \partial d$, where $f^{(r)}=\frac{\partial^r}{\partial d^r}f(d)$. The operator $\frac{\partial}{\partial d}$ is only defined for differentiable functions $f\in \mathcal{F}$. Hence, we are motivated to discretize $f$ by its evaluation on a mesh $D$, i.e. $f_D=\left( f(d_1),\ldots, f(d_m) \right)^\top \in \mathbb{R}^m$, since can always take differences to approximate a discrete derivative. We define the normalized first order difference function $\Delta^1_m:\mathbb{R}^m\to \mathbb{R}^{m-1}$ such that 
	\[
	[\Delta^1_m f_D]_i=\frac{f(d_{i+1})-f(d_i)}{\delta_i},
	\]
	where $i=1,\ldots,m-1$. Our Riemann approximation approach is built around these easily calculable estimates of the first-order discrete derivatives of $f_D$, i.e. $[\Delta^1_m f_D]\in \mathbb{R}^{m-1}$. We treat the operator $\Delta^1_m$ as analogous to $\frac{\partial}{\partial d}$. Furthermore, we replace the Riemann integral with the Riemann sum, using the length of the mesh increments, $\delta_i=d_{i+1}-d_i$, in place of $\partial d$. For example, suppose $r=0$ and $\ell=1$. Under our framework, an estimate of $P(f)$ on a regular mesh $D$ is
	\begin{align}
	P_{D}(f_D) & =  \sum_{i=1}^{m-1}\left|[\Delta^1_m f_D]_i\right|\delta_i \\
	&= \sum_{i=1}^{m-1}\left|f\left(d_{i+1}\right) - f\left(d_i\right)\right|,
	\end{align}
	i.e. the fused lasso problem. 
	
	For regular $D$, we define a normalized $r$th order difference function $\Delta^r_m: \mathbb{R}^m\to \mathbb{R}^{m-r}$ such that 
	\[
	\left[\Delta^r_mf_D\right]_i = \left[\Delta^{r-1}_{m-1}\left[\Delta^1_mf_D\right]\right]_i=\ldots = \left[\Delta^1_{m-r+1}\left[\Delta^1_{m-r}\left[\ldots \left[\Delta^1_mf_D\right]\ldots\right]\right]\right]_i,
	\]  
	where $i=1,\ldots,m-r$. For general $r$ and $\ell$, our Riemann approximation to $P(f)$ on a regular mesh takes the following form:
	\begin{gather}
	P_{D}(f_D) = \sum_{i=1}^{m-r} \left|\delta_i^{\frac{1}{\ell}}[\Delta_m^{r+1} f_D]_i\right|^\ell.
	\end{gather}
	
	$P_{D}(f_D)$ takes relatively simple forms, namely iterated differences of differences. For example, with $r=1$ and $\ell \ge 1$, we get
	\[
	P_{D}(f_D) = \left(\frac{b-a}{m}\right)^{1-2\ell}\sum_{i=1}^{m-2}   \left(f\left(d_{i+2}\right) -2f\left(d_{i+1}\right)  + f\left(d_i\right)\right)^\ell.
	\]
	On a general mesh $D$, the algebra becomes more complex, but we still avoid the analytical complexity of assuming differentiability of $f$. The algebraic formulation of $P_D(f_D)$ will be useful in the multivariate case. In the supplement, we present matrix representation for $P_D(f_D)$, for both regular and irregular mesh cases, which will be useful for deriving solvers for univariate \method. Briefly, we show in the supplement that
	\begin{gather}
	P_{D}(f_D) = \left\|\delta^{\frac{1}{\ell}-r-1}\Delta_{m}^{(r+1)} f_{D}\right\|_\ell^\ell,
	\end{gather}    
	where recursively $\Delta^{(r)}_{n} = \Delta^{(r-1)}_{n-1} \cdot \Delta^{(1)}_n$ and 
	\begin{gather*}
	\Delta^{(1)}_n  =
	\begin{pmatrix}
	-1 & 1 & 0 & \ldots  & 0 & 0 \\
	0 & -1 & 1 & \ldots & 0 & 0 \\
	\vdots &  &  &  & & \\
	0 & 0 & 0 & \ldots & -1 & 1 	
	\end{pmatrix} \in \mathbb{R}^{(n-1)\times n}.
	\end{gather*} 
		
	\subsection{ Choosing $\Oxi$:} There's a vast literature on function interpolation/estimation \citep{burden1989numerical}. Popular choices include linear interpolation, higher order piecewise polynomial interpolation and splines. To maintain convexity of \eqref{eq:app} we ensure $\Oxi$ is a linear function of $f_D$ for each $x_i$. We briefly discuss a general approach, then show how this specializes to piecewise polynomial and spline interpolation.
	
	Suppose we would like to interpolate $b$ points $\theta_{1:b} = \left[f\left(d_1\right),\ldots, f\left(d_b\right)\right]^{\top}$, via $\tilde{f} \gets \sum_{i=1}^b \alpha_i\psi_i$ a linear combination of pre-specified basis elements $\psi_1,\ldots,\psi_b$. Consider the design matrix $\Psi$ with $\Psi_{ij} = \psi_j\left(d_i\right)$. We can find our coefficients $\alpha$ by solving the linear system $\Psi\alpha = \theta_{1:b}$, where $\theta_{1:b}=\left(\theta_1,\ldots,\theta_b\right)^\top$. Thus $\hat{\alpha} = \Psi^{-1}\theta_{1:b}$. Now if we would like to evaluate this interpolation at a new point $x$; we first form $\tilde{\psi} = \left[\psi_1(x),\ldots,\psi_b(x)\right]^{\top}$, and then compute $\tilde{\psi}^{\top}\hat{\alpha} = \tilde{\psi}^\top \Psi^{-1}\theta_{1:b}$. Note that this is linear in $\theta_{1:b}$. 
	
	Suppose we interpolate $f_D$ via a $k$th-order natural spline. First, we define the knot superset $T=\{d_{\left \lfloor k/2 \right \rfloor +2},\ldots, d_{m- \left \lfloor k/2 \right \rfloor } \}$. Let $\tilde{d}=\left(\tilde{d}_1,\ldots, \tilde{d}_{m-k-1} \right)^\top$ denote the column vector containing the set $T$. Next, we specify the basis elements, i.e. 
	\[
	\Psi_1(x)=1, \Psi_2(x)=x,\ldots, \Psi_{k+1}(x)=x^k, \mbox{ and } \Psi_{k+1+j'}(x)=(x-\tilde{d}_{j'})^k_+,
	\]
	for $j'=1,\ldots,m-k-1$. Thus, we have a design matrix $\Psi$ with $\Psi_{ij}=\Psi_j(d_i)$, $i,j=1,\ldots,m$ and coefficients given by $\hat{\alpha} = \Psi^{-1}\theta_{1:m}$, where $\theta_{1:m}\equiv f_D$. For $n$ points $\boldsymbol{x}$, we form an $n\times m$ matrix $\tilde{\psi}$ with $\tilde{\psi}_{i,\cdot}= \left[\psi_1(x_i),\ldots,\psi_{m}(x_i)\right]$. Finally, we compute the interpolation for $\boldsymbol{x}$, $\tilde{\psi}\hat{\alpha} = \tilde{\psi}\Psi^{-1}\theta_{1:m}=O\theta_{1:m}$. Note that we call the $n\times m$ matrix $O=\left(\boldsymbol{o}_1,\ldots,\boldsymbol{o}_n\right)^\top$ the 'interpolation matrix,' since each $i$th row, $\boldsymbol{o}_i^\top$, provides the linear combination of $\theta_{1:m}$ to produce a fitted value $\hat{f}\left(x_i\right)$. Indeed, we define $\Omega_{\boldsymbol{x}}(f_D)$ as the interpolated values of the observed data $\boldsymbol{x}$ such that
	\[
	\Omega_{\boldsymbol{x}}(f_D) = Of_D. 
	\]         
	
	An alternative interpolation is to fit each observation from a local polynomial (of order $k$). We take advantage of the fact that $k+1$ points uniquely define a $k$th degree polynomial. However, this type of local interpolation is different from the smoothing spline described before, since it lacks continuous first derivatives globally. For a point $x_i$, we find a neighborhood of $k+1$ local mesh points $\tilde{d}=\left(\tilde{d}_1,\ldots, \tilde{d}_{k+1} \right)^\top$ such that $x_i \in [\tilde{d}_1,\tilde{d}_{k+1}]$. Let $\theta_{1:k+1}=\left(f(\tilde{d}_{1}),\ldots, f(\tilde{d}_{k+1})\right)^\top=\left( \theta_{1},\ldots, \theta_{k+1}\right)^\top$ be the evaluations of $f$ for the mesh neighborhood about $x_i$. We interpolate $\theta_{1:k+1}$ using the basis functions:
	\[
	\psi_1(x)=1, \psi_2(x)=x,\ldots, \mbox{ and } \psi_{k+1}(x)=x^k. 
	\]
	From these basis functions, we see that this local polynomial coincides with the regression splines defined above for $k=0,1$. Similar to the general approach, we consider a design matrix $\Psi\in \mathbb{R}^{(k+1)\times (k+1)}$ with $\Psi_{i'j}=\psi_j(\tilde{d}_{i'})$. Thus, we can calculate coefficients $\hat{\alpha}= \Psi^{-1}\theta_{1:k+1}$. To find the fitted value of $x_i$, we form $\tilde{\boldsymbol{\psi}}=\left(\psi_1(x_i),\ldots,\psi_{k+1}(x_i) \right)^\top$ and compute $\tilde{\boldsymbol{\psi}}^\top\hat{\alpha} = \tilde{\boldsymbol{\psi}}^\top\Psi^{-1}\theta_{1:k+1}=\tilde{o}_i^\top\theta_{1:k+1}$. Note that $\tilde{o}_i$ is specific to $x_i$. Indeed, for a new point $x_j \not\in [\tilde{d}_1,\tilde{d}_{k+1}]$, we move to another set of neighboring mesh points $\tilde{d}'$  which leads to $\tilde{o}_j$. Hence, we call this interpolation scheme the 'moving local polynomial.' As before, for observations $\boldsymbol{x}=(x_1,\ldots,x_n)$, we can write
	\[
	\Omega_{\boldsymbol{x}}(f_D) = Of_D,
	\]
	where $O=\left(\boldsymbol{o}_1,\ldots,\boldsymbol{o}_n\right)^\top$. For $x_i\in [d_j,d_{j+k}]$, we define a column of length $m$ 
	\[
	\boldsymbol{o}_i=\left(0,\ldots,0,\tilde{\boldsymbol{o}}_i,0,\ldots, 0\right)^\top, 
	\]
	where $\tilde{\boldsymbol{o}}_i$ occupies indices $j$ through $j+k$. 
	 
	\subsection{ Writing the Univariate \method~Problem:}
	
	We have introduced $P_D(f_D)$ and $\Oxi$. For response $y$ and data $\boldsymbol{x}$ with related $k$th-order interpolation matrix $O$, the \method~objective~\eqref{eq:app} can be written simply as
	\begin{equation}\label{eq:app2}
	\tilde{f_D} = \operatorname{argmin}_{f_D\in\mathbb{R}^m}\|y - Of_D \|_2^2 + \lambda_n \left\|\delta^{\frac{1}{\ell}-r-1}\Delta_{m}^{(r+1)} f_{D}\right\|_\ell^\ell.
	\end{equation}
	Often, we may write~\eqref{eq:app2} as
	\begin{equation}\label{eq:v2}
	\tilde{f_D} = \operatorname{argmin}_{f_D\in\mathbb{R}^m}\|y - Of_D \|_2^2 + \lambda_n \left\|\Delta_{m}^{(r+1)} f_{D}\right\|_\ell^\ell.
	\end{equation}
	We denote the fitted values of the data as 
	\[
	\tilde{f} = O\tilde{f_D}.
	\]
  
	\begin{figure} 
		\centering
		\subfloat[ $r=0$, $k=0$ ]{\includegraphics[height=1.5in,width=1.9 in]{./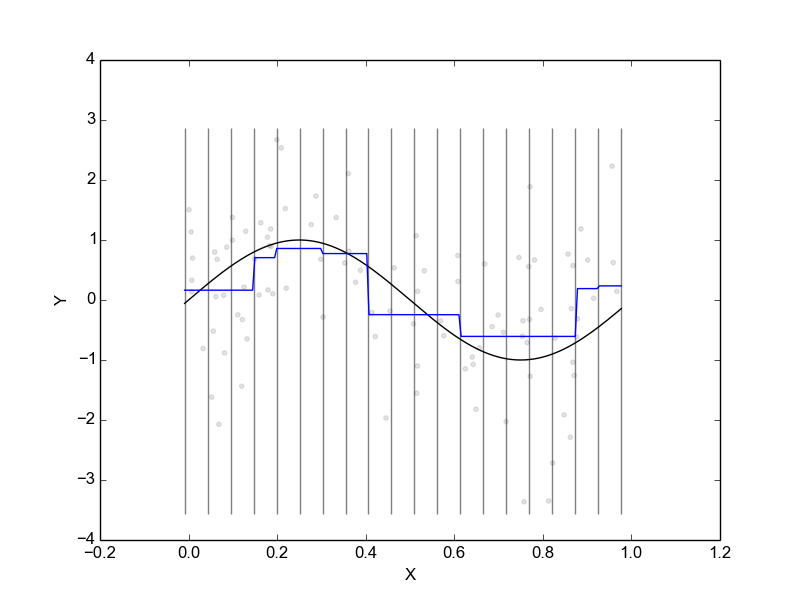}} 
		\subfloat[ $r=1$, $k=1$ ]{\includegraphics[height=1.5in,width=1.9 in]{./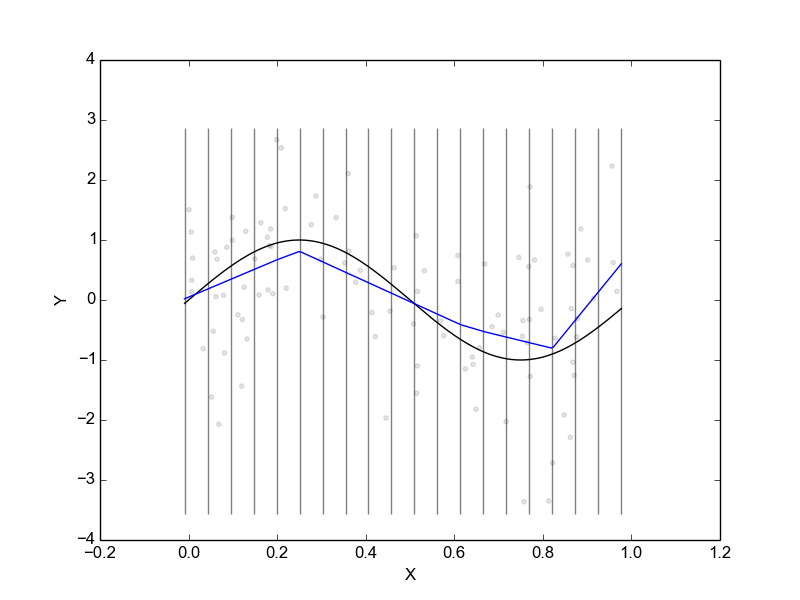}} 
		\subfloat[ $r=2$, $k=2$ ]{\includegraphics[height=1.5in,width=1.9 in]{./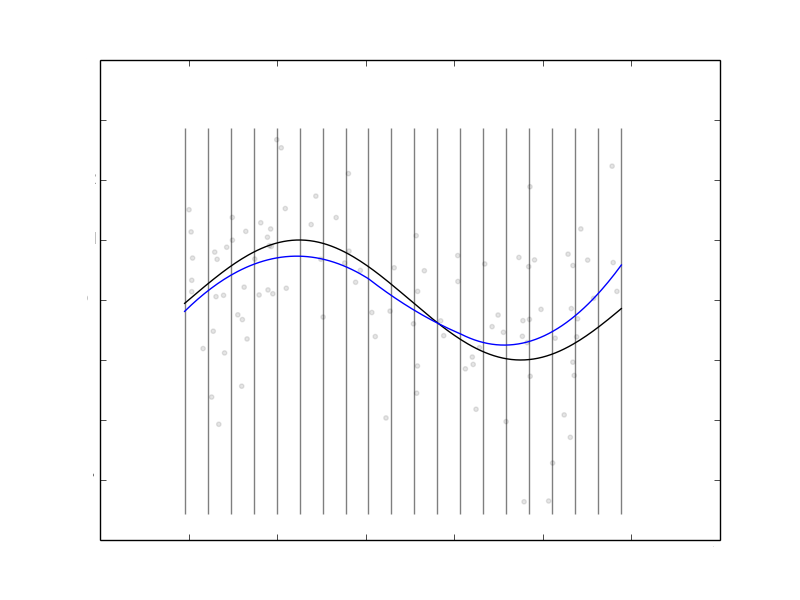}} 
		\caption{We observe $n=100$ noisy observations (transparent) of $f(x)=\sin(2\pi x)$ (black). In blue, we draw \method~estimates with $m=20$ and $(r,k)$ varying. Vertical lines are drawn at the mesh. }
		\label{fig:fits}
	\end{figure}
	In Figure 1, we draw \method~fits using the moving local polynomial over a mesh with $m=20$. Indeed, when $k=r$, we get $r$th-order piecewise polynomial fits. Figure 1 hints to us that \method~will have a relationship with other univariate methods which we explore in the next section.
	
	\subsection{An Alternative Representation of \method~Optimization:} \label{sec:basis}
	Here we show that one can equivalently write our \method~objective~\eqref{eq:method} using a basis expansion. As an alternative to the route we discussed above, one might consider optimizing \eqref{eq:pen} over a linear class $\mathcal{F} \equiv \operatorname{span}\left(\psi_1,\ldots, \psi_K\right)$. As we will discuss in Section~\ref{sec:other}, for some bases $TV\left(\frac{\partial^r}{\partial x^r} \sum\beta_k \psi_k(x)\right)$ has a simple representation, however often it does not. In cases where it does not, we could approximate it using a similar discretization strategy as before: For a given element of our linear space $f = \sum\beta_k \psi_k$ and a mesh $D=(d_1,\ldots, d_m)$ define $f_D\left(\theta\right) \equiv \left[\sum_k\theta_k \psi_k(d_1),\ldots, \sum_k\theta_k \psi_k(d_m)\right]$. Now consider the problem:
	\begin{equation}\label{eq:method_basis}
		\tilde{\theta} = \operatorname{argmin}_{\theta \in \mathbb{R}^K} \frac{1}{n}\sum_{i=1}^n\left(y_i - \sum_k \theta_k \psi_k\left(x_i\right)\right)^2 +\lambda P_D\left(f_D\left(\theta\right)\right)
	\end{equation}
	This is exactly equivalent to our original formulation for \method~\eqref{method} if, for $\psi(D) = \left(\psi(d_1),\ldots \psi(d_m)\right)^\top$, the matrix $\Psi_D = \left(\psi_1(D),\ldots, \psi_K(D)\right)$ is a basis for $\mathbb{R}^m$. In particular, in this case our interpolator is precisely $\Omega_{x}\left(f_D(\theta)\right) = \sum_k \theta_k \psi_k(x)$. If we define $\left(\psi_1,\ldots, \psi_K\right)$ by the rising polynomial basis given in the supplement, then the interpolations are equivalent to those described using the moving local polynomial. However, the moving local polynomial as defined earlier with sparse $\Psi_D$ leads to more efficient matrix operations than the rising polynomial basis with $\Psi_D$ dense. For this computational reason, we prefer implementing interpolation using the moving local polynomial over the rising polynomial basis.     
	
	\section{Comparisons to Other Univariate Methods}\label{sec:other}
	
	There are two other methods for approximately solving~\eqref{eq:pen} with a total variation penalty: $\ell_1$ trend filtering ({\tt TF}) by \cite{tibshirani2014adaptive} and locally adaptive regression splines ({\tt LocARS}) by \cite{mammen1997}. Like the exact solution to the functional problem~\eqref{eq:pen}, both {\tt TF} and {\tt LocARS} give minimax rate-optimal solutions. These two methods use the basis expansion framework discussed in Section~\ref{sec:basis}. They both solve a total variation problem
		\begin{equation}\label{eq:restrict}
		\hat{f}_{TV} \in \operatorname{argmin}_{f\in\mathcal{F}_{n}^{restrict}}\frac{1}{n}\sum_{i=1}^n\left(y_i - f\left(x_{i}\right)\right)^2 + \lambda \cdot TV\left(f^{(r)}\right),
	\end{equation}
	over two different linear subspaces, $\mathcal{F}_{n}^{restrict}$. For {\tt LocARS}, one uses $\mathcal{F}_{n}^{restrict} \equiv \operatorname{span}\left(\psi_1^{LocARS},\ldots, \psi_n^{LocARS}\right)$ where the $\psi_i^{locARS}$ are from the $r$-th order truncated power basis with knots at the observations, $x_i$. For {\tt TF} one uses $\mathcal{F}_{n}^{restrict} \equiv \operatorname{span}\left(\psi_1^{TF},\ldots, \psi_n^{TF}\right)$, where the $\psi_i^{TF}$ are from the $r$-th order falling factorial basis with knots at the observations, $x_i$. More details on these bases can be found in \cite{tibshirani2014adaptive}. These bases are chosen in part because functions in their span permit a simple, finite dimensional, representation of $TV\left(f^{(r)}\right)$. This is in contrast to \method~where the penalty was instead approximated via $P_D(f_D)$. In fact, for each of these bases, \eqref{eq:restrict} can be rewritten as a simple lasso problem:
	\begin{equation}\label{eq:lasso_version}
		\hat{\theta} = \operatorname{argmin}_{\theta\in\mathbb{R}^n}\frac{1}{n}\sum_{i=1}^n\left(y_i - \sum_{k=1}^n \theta_k \psi_k\left(x_{i}\right)\right)^2 + \lambda \sum_{k=r+1}^n \left|\theta_k\right|,
	\end{equation}
	with $\hat{f} \gets \sum_{k=1}^n \hat{\theta}_k \psi_k$, where $\psi_k \equiv \psi_k^{TF}$ or $\psi_k \equiv \psi_k^{locARS}$ for {\tt TF} and {\tt LocARS}, respectively. While the lasso form of these problems is useful for interpretation, solving either {\tt TF} or {\tt LocARS} by applying a general purpose lasso solver to \eqref{eq:lasso_version} is very inefficient, as the design matrix is poorly conditioned. {\tt TF}, unlike {\tt LocARS}, is more amenable to efficient computation: One can further rewrite {\tt TF} as a particular instance of \method:
	\begin{equation}\label{eq:TF}
	\hat{f}_D \in \operatorname{argmin}_{f_D\in\mathbb{R}^n}\frac{1}{n}\sum_{i=1}^n\left(y_i - f_D\left(x_{i}\right)\right)^2 + \lambda P_D\left(f_D\right)
	\end{equation}
	where we use a mesh with grid points at the observations $D=\left(x_1,\ldots,x_n\right)$, and $P_D$ is our discretized TV penalty from before.
	
	For Whittaker Smoothing, the problem can be written similarly to~\eqref{eq:v2} with $m=n$. However, in Whittaker Smoothing, the aim is to adjust a regular mesh of size $n$, not necessarily find the most efficient mesh of size $m<n$ relative to a computationally efficient interpolation matrix $O$. 
	
	There are 2 main downsides to basing the mesh on the observations or the size of the observations: 1) We have $n$ tuning parameters, which can slow down computation, when for statistical accuracy many fewer are needed; and 2) the uneven spacing of the observations means that \eqref{eq:TF} is still a poorly conditioned problem --- this leads to instability for many convex solvers (as noted in \cite{ryan2015fast}). The ability of our method to use a regular mesh, with fewer than $n$ knots is a potentially significant asset (especially in problems with many observations).
	
	\method~has an additional advantage when the features lie in $2$ or more dimensions (discussed further in Section~\ref{sec:multivar}). The obvious extensions of {\tt TF} and {\tt LocARS} to higher dimensions use complicated and computationally inefficient basis expansions in $2+$ dimensions (eg. thin plate splines). \method~allows us to work with multivariate local polynomial interpolators. These are simple objects, and allow us to maintain a sparse representation of our interpolation matrix. In addition, because we work with a discrete approximation to our penalty of choice, we can simultaneously use sparse representations of our discretized penalty and our interpolator. This allows us to easily extend our method and computation to multiple features and thousands of observations.
	
	\section{Multivariate \method}\label{sec:multivar}
	
	Consider the multivariate inputs $\boldsymbol{x}_1,\ldots, \boldsymbol{x}_p \in \mathbb{R}^p$, where $\boldsymbol{x}_j \in [a_j,b_j]$. Similar to before, we assume a generative model of the form
	\[
	y_i = f^* \left(x_{i1},\ldots, x_{ip} \right) + w_i
	\]
	where $f^*$ is an unknown function from a known function class $\mathcal{F}$, and $w_i$ are iid errors with $\operatorname{E}\left[w_i\right] = 0$ and $\operatorname{var}{w_i} = \sigma^2 < \infty$. For multivariate \method, similar tasks have to be completed as univariate \method, namely the selection of the mesh for each considered variable, as well as choosing an interpolation scheme and the Riemann approximation/finite difference order. Let $D\equiv \{D_1,\ldots,D_p \}$, where $D_j$ is the $m_j$-mesh for the $j$th covariate, i.e. 
	\[
	D_j: a_j \le d_{1j} \ldots \le d_{m_j j}=b_j.
	\]
	It is not immediately clear how to interpolate on the multivariate mesh $D$, much less how to approach the approximation $P_D(f_D)$ of $P(f)$. We begin by describing the class we assume for $P(f)$ and our approximation approach. Then we describe the moving local polynomial in the multivariate case. In the supplement, we provide simple matrix representation for bivariate \method. 
	
	\subsection{ Riemann Approximations to Sobolev-like Norms on a Bivariate Mesh:}
	
	We begin by introducing the Sobolev-like form we assume for $P(f)$ in the general case, i.e. $p\ge 2$. Suppose for $p$ covariates, we are interested in $p$ orders of differences given in the multi-index $\boldsymbol{r}=\left( r_1,\ldots, r_p\right)$. Let $\mathcal{D}^{\boldsymbol{r}} f = \frac{\partial^{|\boldsymbol{r}|} f}{\partial x_1^{r_1}\ldots\partial x_p^{r_p}}$ denote the analytic partial derivative, where $|\boldsymbol{r}| = \sum_{i=1}^{p} r_i$. In general, we may be interested in collections of partial derivatives, i.e. $\{\boldsymbol{r}_1,\ldots,\boldsymbol{r}_S\}$. We assume Sobolev-like norms of the following form: 
	\begin{equation}
	P(f)= \|f\|_\ell^\ell = \begin{cases}
	\sum_{s=1}^{S} \|\mathcal{D}^{\boldsymbol{r}_s} f\|_\ell^\ell &\text{ $1\le \ell < +\infty $}\\
	\sup_{s} \|\mathcal{D}^{\boldsymbol{r}_s} f\|_\infty &\text{$\ell=\infty$}
	\end{cases}.
	\end{equation}
	For example, with $p=2$, the collection of first order differences, $\{\boldsymbol{r}_1=(1,1), \boldsymbol{r}_2=(1,0), \boldsymbol{r}_3=(0,1)\}$, specifies the fused lasso bivariate analog \citep{geer2000empirical}.
	
	For our approximation, we introduce some notation. Suppose we have $p$ covariates each with a regular mesh of sizes denoted in the vector $\boldsymbol{m}=(m_1,\ldots,m_p)$. We indicate the bin widths for each of the meshes by $\boldsymbol{\delta}=(\delta_1,\ldots,\delta_p)$, where $\delta_j = d_{i+1,j}-d_{i,j}$ for any $i$. Furthermore, we will assume in this section that 
	\[
	\mathbb{R}^{\boldsymbol{m}}\equiv \mathbb{R}^{m_1}\times \ldots \times \mathbb{R}^{m_p}.
	\] 
	The functional values on the grid are denoted by the $p$-tensor $f_D \in \mathbb{R}^{\boldsymbol{m}}$. Let $(f_D)_{\boldsymbol{i}}=f(d_{\boldsymbol{i}})= f(d_{i_1,\ldots, i_p})$, where $\boldsymbol{i}=(i_1,\ldots,i_p)$. Furthermore, we denote the unit vectors of length $p$ by $\ej$ for $j=1,\ldots,p$, where
	\[
	\ej = (0,\ldots \overbrace{,1,}^{j-th} 0,\ldots,0).
	\]
	
	Recall the univariate normalized first order difference function for an $m$-mesh $\Delta^1_m: \mathbb{R}^m\to \mathbb{R}^{m-1}$ defined in the previous section. We generalize $\Delta^1_m$ (via an extra index) so that we have a normalized first order difference function for the $j$th covariate and any pair of indices such that $\Delta_{\boldsymbol{m},j}^1: \mathbb{R}^{\boldsymbol{m}}\to \mathbb{R}^{\boldsymbol{m}-\ej}$ and  
	\[
	\left[\Delta_{\boldsymbol{m},j}^1f_D \right]_{\boldsymbol{i}} = \frac{f(d_{\boldsymbol{i}+\ej})-f(d_{\boldsymbol{i}})}{\delta_j}.
	\]
	For any pair of indices, we define the $r$th order normalized difference operator $\Delta_{\boldsymbol{m},j}^r: \mathbb{R}^{\boldsymbol{m}}\to \mathbb{R}^{\boldsymbol{m}-r\ej}$ by the recursive formula
	\[
	\left[\Delta_{\boldsymbol{m},j}^rf_D\right]_{\boldsymbol{i}}=\left[\Delta_{\boldsymbol{m}-\ej,j}^{r-1}\left[\Delta_{\boldsymbol{m},j}^1f_D\right]\right]_{\boldsymbol{i}},
	\]
	where $i_j=1,\ldots,m_j-1$ and $i_r=1,\ldots,m_r$ for $r=1,\ldots,p$ ($r\ne j$).
	
	With the generalized first order difference, we approximate $\mathcal{D}^{\boldsymbol{r}}f$ by
	\[
	\Delta^{\boldsymbol{r}}_{\boldsymbol{m}}f_D = \Delta_{\boldsymbol{m}_1,1}^{r_1}\Delta_{\boldsymbol{m}_2,2}^{r_2}\ldots\Delta_{\boldsymbol{m}_p,p}^{r_p}f_D,
	\]
	where $\boldsymbol{m}_p=\boldsymbol{m}$ and $\boldsymbol{m}_{j-1}=\boldsymbol{m}_j-r_j\ej$. Thus, our $\boldsymbol{r}=(r_1,\ldots,r_p)$-order Riemann approximation of $P(f)=\|f\|_\ell^\ell$ using regular meshes for each covariate is given by
	\begin{align}
	P_D(f_D) &= \sum_{\boldsymbol{i}\preceq \boldsymbol{m}-\boldsymbol{k}} |(\delta_1\delta_2\ldots\delta_p)^{1/\ell}[\Delta^{\boldsymbol{r}}_{\boldsymbol{m}}f_D]_{\boldsymbol{i}}|^\ell \\ 
	&= \sum_{\boldsymbol{i}\preceq \boldsymbol{m}-\boldsymbol{k}} |(\delta_1\delta_2\ldots\delta_p)^{1/\ell}[\Delta_{\boldsymbol{m}_1,1}^{r_1}\Delta_{\boldsymbol{m}_2,2}^{r_2}\ldots\Delta_{\boldsymbol{m}_p,p}^{r_p}f_D]_{\boldsymbol{i}}|^\ell,
	\end{align}
	where $\boldsymbol{i} \preceq \boldsymbol{m}-\boldsymbol{r}=\{i_1\le m_1-r_1,\ldots, i_p-r_p\}$. 
	For a collection of partials $\{\boldsymbol{r}_1,\ldots,\boldsymbol{r}_S\}$, we use the following approximation:
	\[
	P_D(f_D)= \sum_{s=1}^{S} \sum_{\boldsymbol{i}\dot{\le} \boldsymbol{m}-\boldsymbol{r}_s} |(\delta_1\delta_2\ldots\delta_p)^{1/\ell}[\Delta^{\boldsymbol{r}_s}_{\boldsymbol{m}}f_D]_{\boldsymbol{i}}|^\ell
	\]
	
	As can be seen, the Riemann approximation extends in a straightforward manner from the univariate case to the multivariate case. In the supplement, we present a matrix notation for both the univariate and bivariate \method~problems.  
	
	\subsection{ Multivariate Interpolation: }
	
	We now describe our approach to multivariate interpolation from a mesh for a sample point $\boldsymbol{x}=(x_{1},\ldots,x_{p})$. The moving local polynomial will be particularly attractive computationally, since we will interpolate $\boldsymbol{x}$ using the minimal number of points needed for a $k$th degree polynomial while keeping $k$ small. First, we update some of the notation previously defined in the section on the univariate moving local polynomial. Recall the previously defined $\tilde{\boldsymbol{d}}$, which denoted in the univariate case the neighborhood of $k+1$ mesh values about a sample point. In parallel, $N^{\boldsymbol{x}}$ denotes a neighborhood of the mesh surrounding $\boldsymbol{x}$. For an order $R$ interpolation, $N^{\boldsymbol{x}}$ contains the $L= {k+p \choose p}$ nearest mesh elements, e.g. $N^{\boldsymbol{x}}=\{\boldsymbol{d}_{1},\ldots,\boldsymbol{d}_{L} \}$. We denote the fitted values for the mesh points used in the interpolation as $\theta_{N^{\boldsymbol{x}}}=\left( \theta_{1},\ldots,\theta_{L}\right)$.
	
	Next, we discuss multivariate polynomial interpolation then introduce the multivariate moving local polynomial. Suppose we want to interpolate $\boldsymbol{x}=(x_1,\ldots,x_p)$ via an $k$th order polynomial: An $k$th order polynomial in general form is:
	\[
	f_k(\boldsymbol{x}) = \beta_{0} + \sum_{j\le p} \beta_{j}x_j + \sum_{j_1 \leq j_2 \leq p} \beta_{j_1,j_2} x_{j_1}x_{j_2} +\ldots + \sum_{j_1\leq \ldots j_k \leq p} \beta_{j_1,\ldots, j_k} x_{j_1}\cdots x_{j_k}.
	\]
	For an $k$th order polynomial in $p$ dimensions we have $T=\left[1+p + ({p \choose 2} + p) + \ldots\right]$ total parameters contained in $\boldsymbol{\beta}=\left(\beta_0,\beta_1,\ldots,\beta_p,\ldots \right)^\top \in \mathbb{R}^T$. Using basis elements as in previous sections, we can write $f_R({\bf x})$ as
	\[
	f_k({\bf x})  = \boldsymbol{\beta}^{\top} \left[\psi_1({\bf x}),\ldots, \psi_T({\bf x})\right]^{\top},
	\]
	where 
	\begin{align*}
	\psi_1({\bf x})&=1, \psi_2({\bf x})=x_1, 	\ldots,\psi_{p+1}({\bf x})=x_p, \\ 
	 \psi_{p+2}({\bf x})&=x_1^2,\psi_{p+3}({\bf x})=x_1 x_2 \ldots, \psi_{2p+2}({\bf x})=x_1 x_p \\ 
	 \psi_{2p+3}({\bf x})&=x_2^2, \ldots  \ldots \psi_{T}({\bf x})=x_p^k.
	\end{align*}
	
	For a point $\boldsymbol{x}$, we will want to form a system of linear equations to interpolate the mesh points contained in $N^{\boldsymbol{x}}$. Using $\{ \boldsymbol{d}_{i} \}_{i=1,\ldots,L} \in N^{\boldsymbol{x}}$, we get a similar system of equations as before:
	\[
	\Psi \boldsymbol{\beta} = \tilde{\theta}
	\]
	with $\tilde{\theta} = [\theta_1, \ldots, \theta_L]$ (our approximation of $\theta_{N^{\boldsymbol{x}}}$), and $\Psi_{ij} = \psi_j({\bf d}_{i})$ ($j=1,\ldots,T$ and $i=1,\ldots,L$). Now when we solve that linear system we get our coefficients for interpolating in that region, i.e. $\Psi^{-1}\tilde{\theta}$. For a new sample point, ${\bf x}_{new}$, in that region, we interpolate with 
	\begin{align*}
	\Omega_{{\bf x}_{new}}(f_D) &= \left(\left[\psi_1({\bf x}_{new}),\ldots, \psi_L({\bf x}_{new})\right] \Psi^{-1}\right)\tilde{\theta} \\
	&= \boldsymbol{a}^\top \tilde{\theta},
	\end{align*}
	where $\boldsymbol{a}$ denote the weights for a linear combination. 
	
	We have described multivariate interpolation over $L$-interpolants for a point of interest ${\bf x}$. We would like a linear operator similar to the univariate case, i.e. $\Omega_{{\bf x}}(f_D)=Of_D$. In the univariate case, the $L=k+1$ nearest mesh points made a neighborhood of $k+1$-consecutive points about an observed data value $x_i$, making the interpolation matrix $O$ banded. However, in this case, the $L$ points will not be consecutive in one direction. For example, in a bivariate scenario with ${\bf m}=(4,4)$, suppose we have 
	\begin{gather*}
	f_D  =
	\begin{pmatrix}
	f(d_{1,1}) & f(d_{1,2})  & f(d_{1,3})  & f(d_{1,4})  \\
	f(d_{2,1}) & f(d_{2,2})  & f(d_{2,3})  & f(d_{2,4})  \\
	f(d_{3,1}) & f(d_{3,2})  & f(d_{3,3})  & f(d_{3,4})  \\
	f(d_{4,1}) & f(d_{4,2})  & f(d_{4,3})  & f(d_{4,4})  \\ 
	\end{pmatrix},
	\end{gather*} 
	The $L$ interpolants to an observation ${\bf x}_i$ could be $\left(f(d_{1,1}),f(d_{1,2}),f(d_{2,2})\right)$. We can define an observation specific interpolation matrix $O_i$ such that 
	\begin{gather*}
	O_i  =
	\begin{pmatrix}
	a_{1,1} & a_{1,2} & 0 & 0  \\
	0 & a_{2,2} & 0 & 0  \\
	0 & 0 & 0 & 0  \\
	0 & 0 & 0 & 0  
	\end{pmatrix},
	\end{gather*} 
	where ${\bf a} = \left(a_{1,1},a_{1,2},a_{2,2}\right)$ are weights determined as previously described. Using $O_i$, we can describe an interpolation using the inner product $\langle\cdot\rangle$:
	\begin{align}
	\Omega_{{\bf x}_i}(f_D) &= \langle O_i, f_D\rangle \\
	&= {\bf tr}(O_i^\top f_D) \\
	&= a_1f_1+a_2f_2+a_3f_6.
	\end{align}  
	Alternatively, we could define $\vec{f_D}$ as the stacking of the rows of $f_D$ into a single column, i.e. 
	\begin{equation}
	\vec{f_D} = \left(f(d_{1,1}),\ldots,f(d_{1,4}),f(d_{2,1}),\ldots,f(d_{3,1}),\ldots,f(d_{4,1}), \ldots, f(d_{4,4})\right)^\top.
	\end{equation}
	In turn, we could define a vector $\vec{o}_i$ as the stacking of the rows of $O_i$, i.e 
	\[
	\vec{o}_i = \left(a_{1,1},a_{1,2},0,0,0,a_{2,2},0,\ldots,0\right)^\top.
	\]
	Using this notation, we arrive at an interpolation matrix $O$, i.e 
	\[
	O = \left(\vec{o}_1,\vec{o}_2,\ldots,\vec{o}_n\right)^\top. 
	\]
	For multivariate data $X=\left(\boldsymbol{x}_1,\ldots,\boldsymbol{x}_p\right)^\top \in \mathbb{R}^{n\times p}$, we define $\Omega_X(f_D)$ as the interpolation of the observed multivariate data using the $p$-tensor $f_D \in \mathbb{R}^{\boldsymbol{m}}$, i.e.
	\begin{equation}\label{mv_terp}
	\Omega_X(f_D) = O\vec{f_D}.
	\end{equation}
	Note that~\eqref{mv_terp} applies for any $p$. When the dimensions of the tensor $f_D$ grow with $m$ and $p$, $\vec{f_D}$ will grow in length (because of the stacking). 
	
	Briefly, we explain some considerations on $k$. Recall that in univariate \method, we require $k\le r$. In multivariate \method, the collections of partials $\{\boldsymbol{r}_1,\ldots,\boldsymbol{r}_S\}$ can designate both isotropic (same order) and anisotropic (mixed order) differences. For example, in a bivariate setting, the collection of first differences contains the isotropic difference, $\boldsymbol{r}_1=(1,1)$, and the anisotropic differences $\boldsymbol{r}_2=(1,0)$ and $\boldsymbol{r}_3=(0,1)$. Here, $k\le 1$, since we are assuming first order smoothness in both predictors. Let $\{\boldsymbol{r'}_1=(1,1),\ldots,\boldsymbol{r'}_{S'}=(S',S')\}$ denote the isotropic differences such that $1\le2\ldots\le S'$ for integer $S'\ge 1$. The rule we follow with multivariate \method~is $k\le S'$.   
	
	\subsection{Writing the Multivariate \method~Objective:}
	
	Suppose we observe response $y_i=f(\boldsymbol{x }_i)+w_i$ for multivariate predictors $\boldsymbol{x}_i\in \mathbb{R}^{p}$ ($i=1,\ldots,n$) with $w_i\sim (0,\sigma^2)$. The $\{\boldsymbol{r}_1,\ldots,\boldsymbol{r}_S\}$-order \method~with $k$th-order interpolation estimates $\tilde{f_D}= \left(f(d_1),\ldots,f(d_m)\right)^\top$ on a regular $m$-mesh are given by 
	\begin{equation}{\label{eq:mvmbs}}
	\min_{f_D \in \mathbb{R}^{\boldsymbol{m}}} \frac{1}{n}\sum_{i=1}^{n}\left(y_i-\langle O_i, f_D\rangle \right)^2+\lambda \sum_{s=1}^{S} \sum_{\boldsymbol{i}\dot{\le} \boldsymbol{m}-\boldsymbol{r}_s} |(\delta_1\delta_2\ldots\delta_p)^{1/\ell}[\Delta^{\boldsymbol{r}_s}_{\boldsymbol{m}}f_D]_{\boldsymbol{i}}|^\ell.
	\end{equation}
	$O_i$ is the $k$th-order interpolation matrix specific to an observation $\boldsymbol{x}_i$ as described in the previous subsection. 
	
	Often, it will be useful to use $O$ and $\vec{f_D}$, the stacked versions of the interpolation matrices $O_1,\ldots,O_n$ and the $p$-tensor $f_D$. We can rewrite the problem in~\eqref{eq:mvmbs} as
	\begin{equation}{\label{eq:mvmbs2}}
	\min_{f_D \in \mathbb{R}^{\boldsymbol{m}}} \left\|y-O \vec{f_D} \right\|_2^2+\lambda \sum_{s=1}^{S} \sum_{\boldsymbol{i}\dot{\le} \boldsymbol{m}-\boldsymbol{r}_s} |(\delta_1\delta_2\ldots\delta_p)^{1/\ell}[\Delta^{\boldsymbol{r}_s}_{\boldsymbol{m}}\vec{f_D}]_{\boldsymbol{i}}|^\ell
	\end{equation} 
	or
	\begin{equation}{\label{eq:bigmvmbs2}}
	\min_{f_D \in \mathbb{R}^{\boldsymbol{m}}} \left\|y-O \vec{f_D} \right\|_2^2+\lambda \|\mathcal{D}\vec{f_D}\|_1,
	\end{equation} 
	for carefully constructed difference operator $\mathcal{D}$. Our fitted values for $X$ are given by $\tilde{f}=O\hat{\vec{f_D}}$.  
	
	\section{Solving \method~Optimization}

	
	In this section, we describe an ADMM solver for the univariate \method~problem when $\ell=1$. Note that in this case the solver will be similar to the standard ADMM algorithm for trend filter given by \citep{ryan2015fast}. For an ADMM as described by \cite{boyd2011distributed}, we begin by rewriting~\eqref{eq:app2} with $\ell=1$ as
	\begin{equation}
	\min_{f_D\in\mathbb{R}^m, \alpha\in \mathbb{R}^{m-k-1}} \|y - Of_D \|_2^2 + \lambda_n \left\|\alpha\right \|_1  \text{ subject to } \alpha = \Delta_{m}^{(r+1)} f_{D}.
	\end{equation}  
	We write the augmented Lagrangian as 
	\[
	L(f_D,\alpha,u) = \|y-Of_D\|_2^2+\lambda\|\alpha\|_1+\frac{\rho}{2} \|\alpha-\Delta_{m}^{(r+1)} f_D+u\|_2^2-\frac{\rho}{2}\|u\|_2^2, 
	\]
	from which we define the following ADMM updates:
	\begin{align}
	f_D &\leftarrow \left(O^\top O + \rho(\Delta_{m}^{(r+1)})^\top \Delta_{m}^{(r+1)}\right)^{-1}\left(O^\top y +\rho (\Delta_{m}^{(r+1)})^\top(\alpha+u)\right), \\
	\alpha & \leftarrow S_{\lambda/\rho}\left(\Delta_{m}^{(r+1)} f_{D}-u\right), \\
	u &\leftarrow u+\alpha-\Delta_{m}^{(r+1)} f_{D}.
	\end{align}
	When using $k$th order moving local polynomials, $O$ becomes banded with bandwidth $k+2$. Meanwhile, $\Delta_{m}^{(r+1)}$ also is banded with bandwidth $r+2$. Since $k\le r$, the $f_D$-update can be implemented with time $O\left(m(r+2)^2+n(r+2)\right)$ and $O\left(m(r+2)^2\right)$ after the first iteration with caching. Updating $\alpha$ with coordinate-wise soft-thresholding ($S_{\lambda/\rho}$) requires time $O(m-k-1)$, while updating $u$ takes $O\left(m(r+2)\right)$ time. Considering $k$ and $r$ as constants, a full iteration of ADMM updates can be done in linear time. 
	
	Solving the multivariate \method~problem given in~\eqref{eq:bigmvmbs2} via ADMM follows easily. In the steps for the univariate ADMM, simply replace  $f_D$ with $\vec{f_D}$ and $\Delta_{m}^{(r+1)}$ with $\mathcal{D}$. An implementation of \method~was programmed using Python (availability). We use this described ADMM solver to run the simulation studies in the following section.        
	
	\section{Simulation Study}
	\vspace{-.05in}
	In this section, we conduct a simulation study showing how the statistical and approximation error decrease as functions of $m$, $n$, $r$, and $k$ for \method~solutions. We generate a response $y_i = f(x_i)+\epsilon_i$, where $f(z)=e^{\pi z}$, $x_i\in U[0,1]$ and $\epsilon_i\sim N(0,1)$ with sample sizes of $n=40,80,120$. For each $(x,y)$-pair, we solve the $\ell_1$ \method~problem using $m=4,5,6,7,8,10,20,30,90$, $r=0,1,2$ and $k=0,1,2$ (via MLP). Note that we maintain $k\le r$ by solving the following pairings: $r=0$ and $k=0$; $r=1$ and $k=0,1$; as well as $r=2$ and $k=0,1,2$.
	
	When tuning $\lambda$, we choose 50 logarithmically spaced values from $10^{-3}$ to $\lambda_{max}$, where 
	\[
	\lambda_{max} = \left\| \left(O\left(\Delta_{m}^{(r+1)}\right)^{-1}\right)^{\top}y\right\|_{\infty}.
	\]
	For each $(m,r,k)$-configuration of \method, we calculate $RMSE = \left(\sum_{j=1}^{500}MSE_j\right)^{\frac{1}{2}}$ where
	\[
	MSE_j = \sum_{i=1}^{n} \left(\tilde{f}(x_i)-f(x_i)\right)^2
	\]
	and $\tilde{f}$ denotes the \method~estimate. 
	 
	\begin{figure} 
		\centering
		\subfloat[ $r=0$ ]{\includegraphics[height=1.5in,width=1.9 in]{./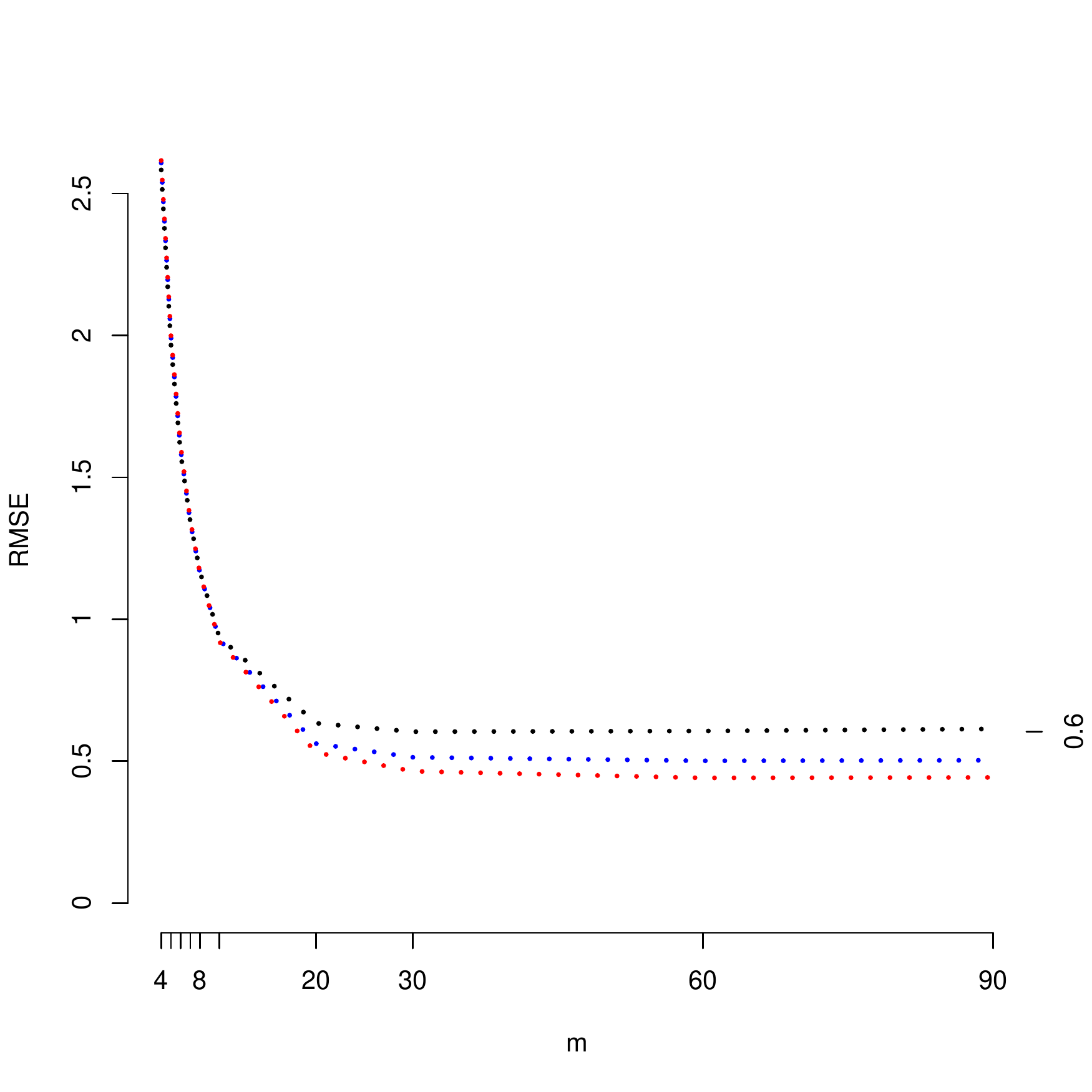}} 
		\subfloat[ $r=1$ ]{\includegraphics[height=1.5in,width=1.9 in]{./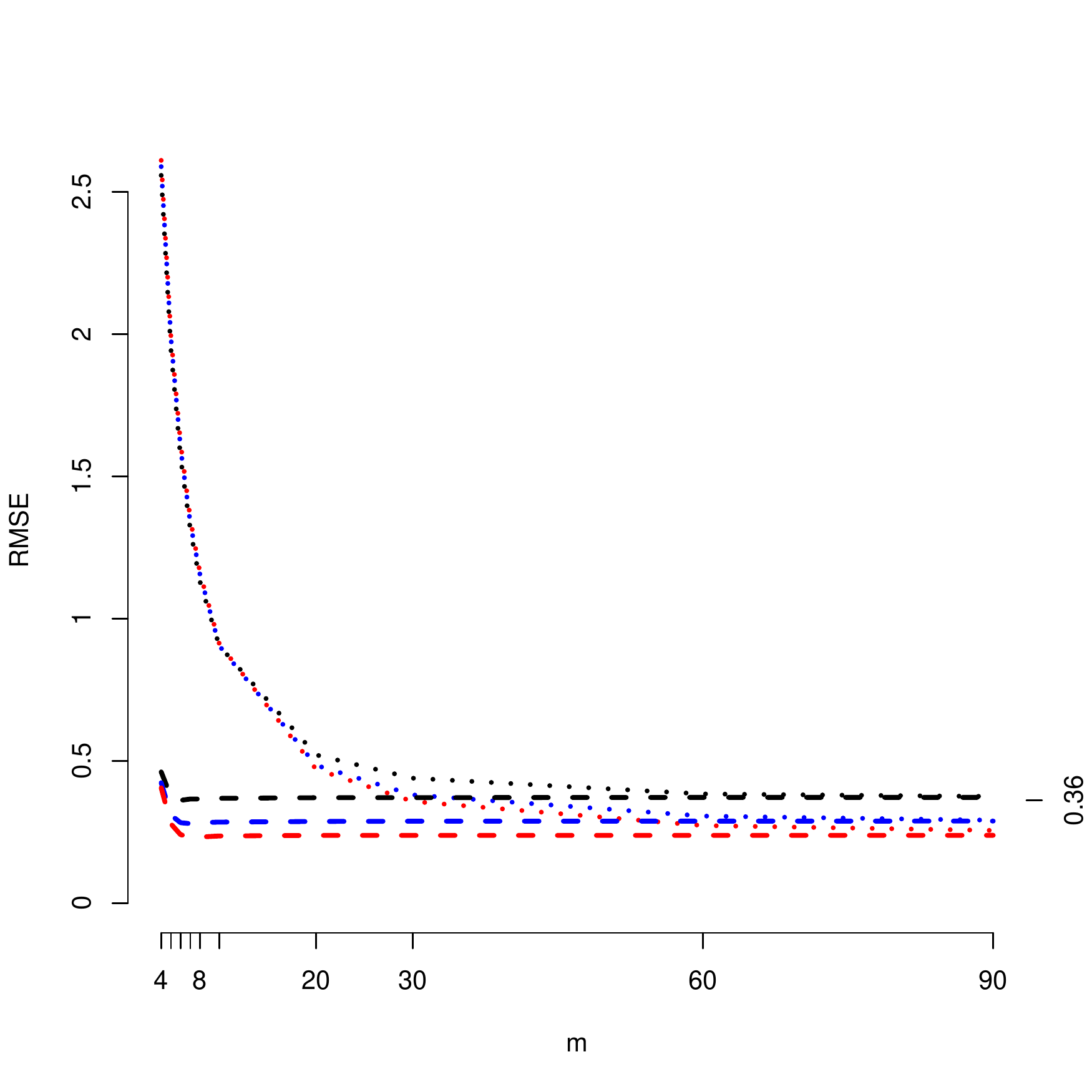}} 
		\subfloat[ $r=2$ ]{\includegraphics[height=1.5in,width=1.9 in]{./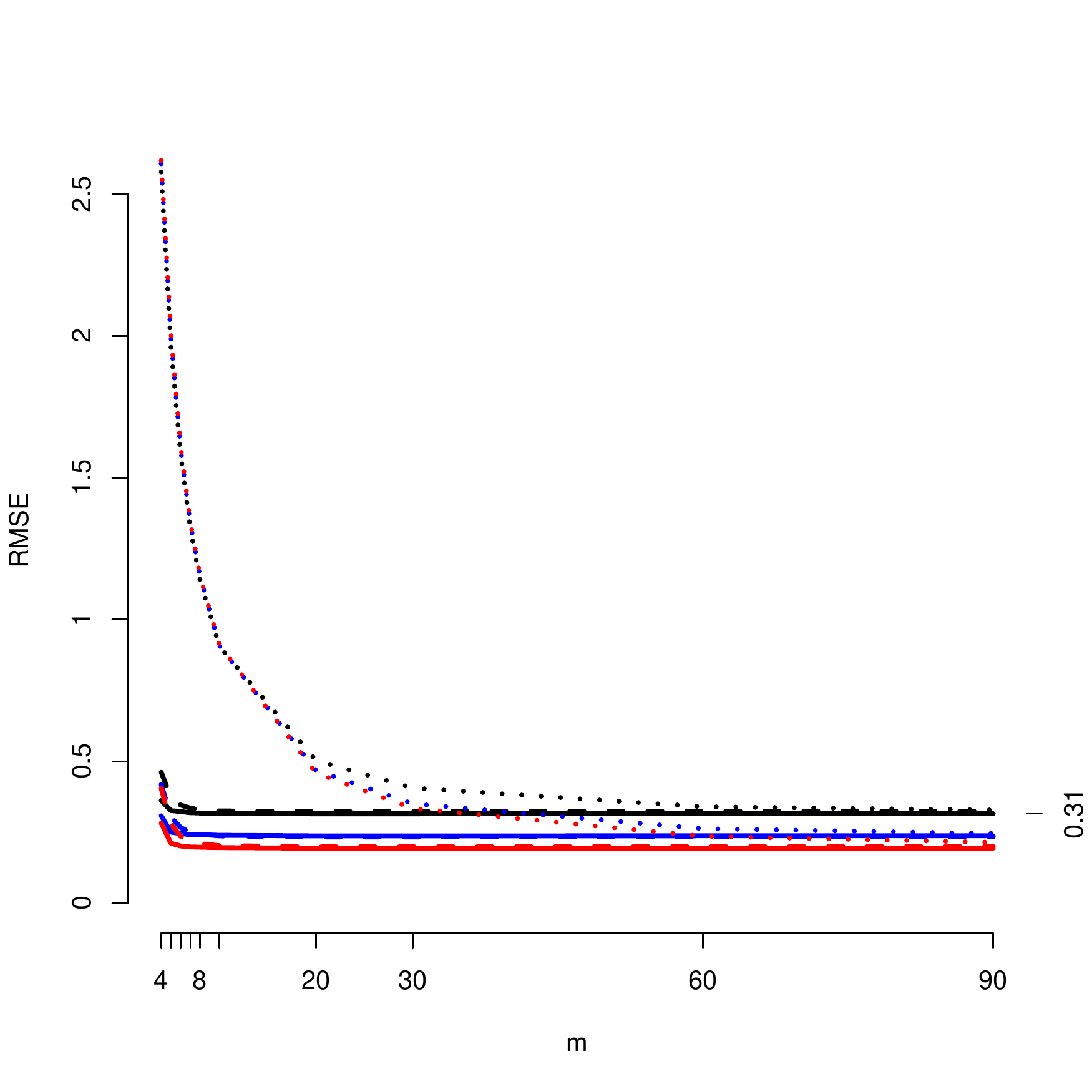}} \\
		\subfloat[ $r=0$ ]{\includegraphics[height=1.5in,width=1.9 in]{./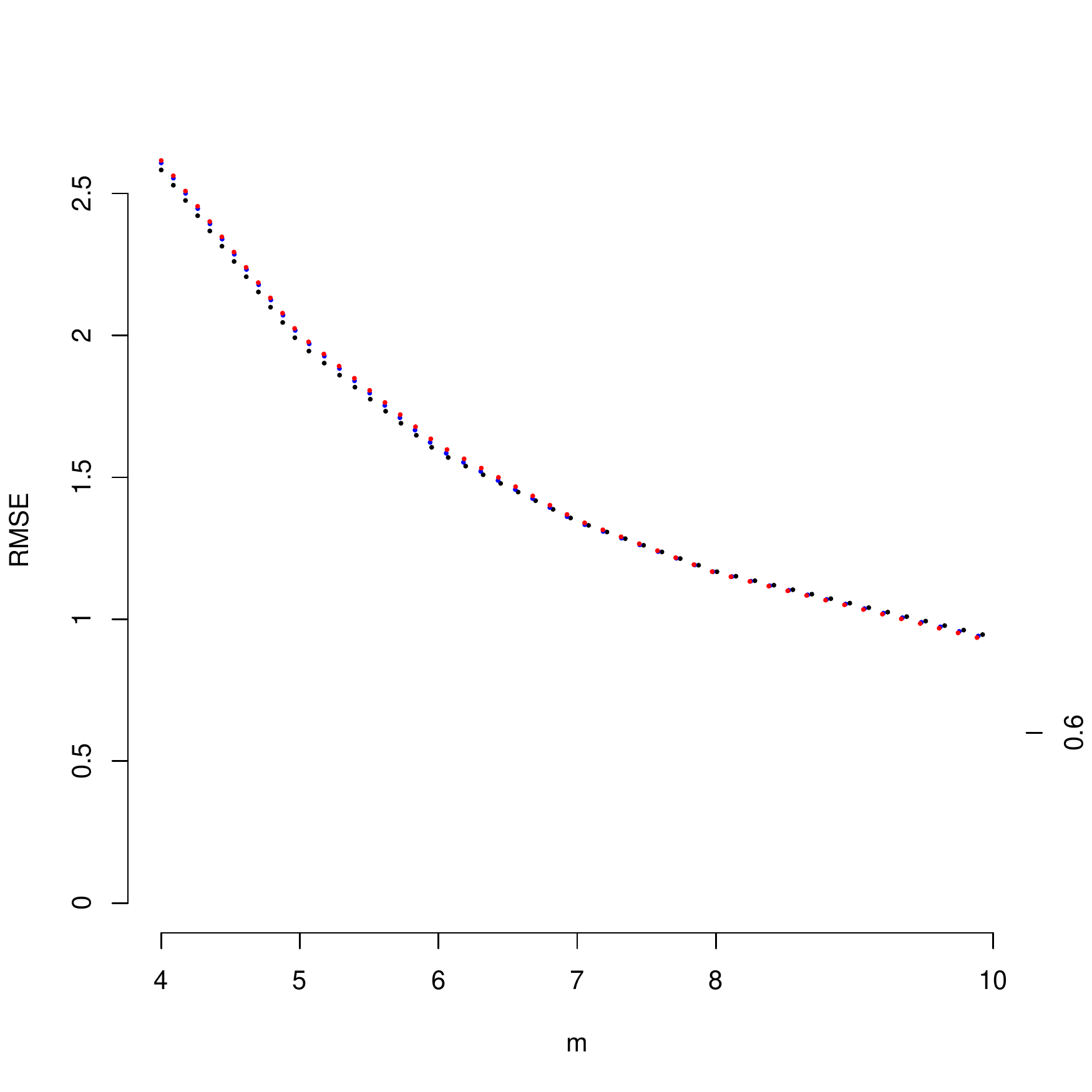}} 
		\subfloat[ $r=1$ ]{\includegraphics[height=1.5in,width=1.9 in]{./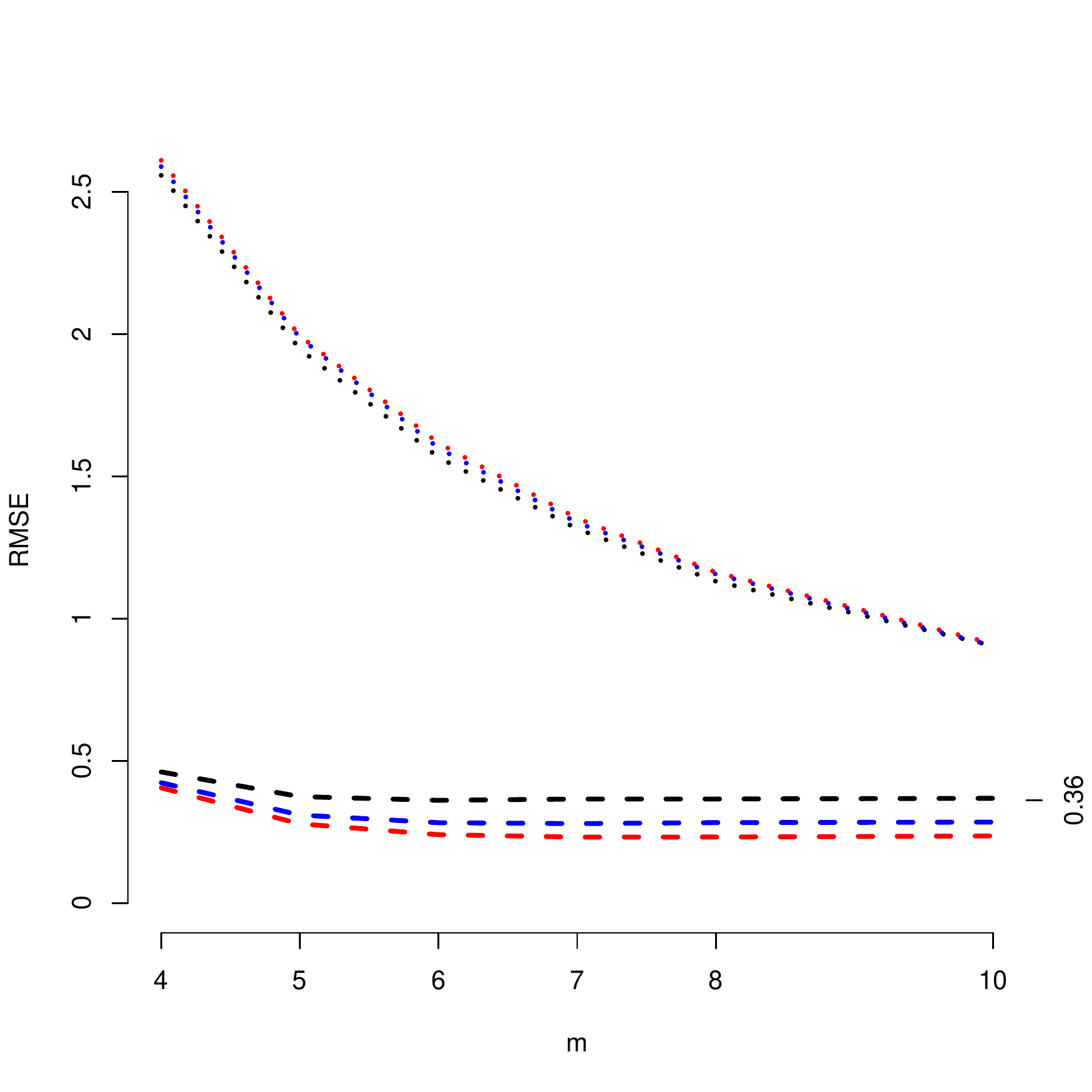}} 
		\subfloat[ $r=2$ ]{\includegraphics[height=1.5in,width=1.9 in]{./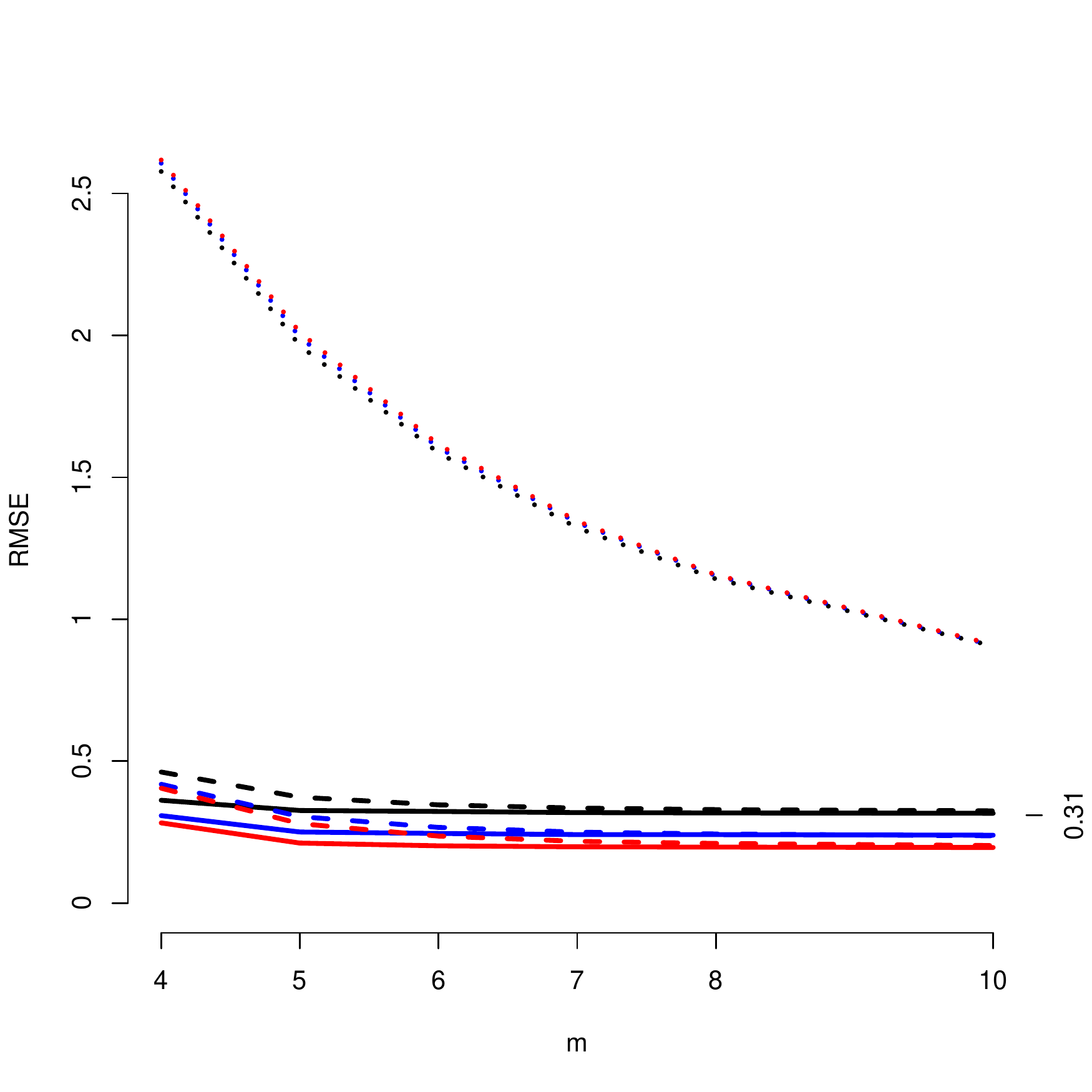}} 
		\caption{Results for 500 simulations over data generated from an exponential function with noise for $n=40, \textcolor{blue}{80}, \textcolor{red}{120}$. \method~models were fit over varying $m$, $r$, and $k$. Line type denotes $k$: 0~(\protect\includegraphics[height = 1em]{./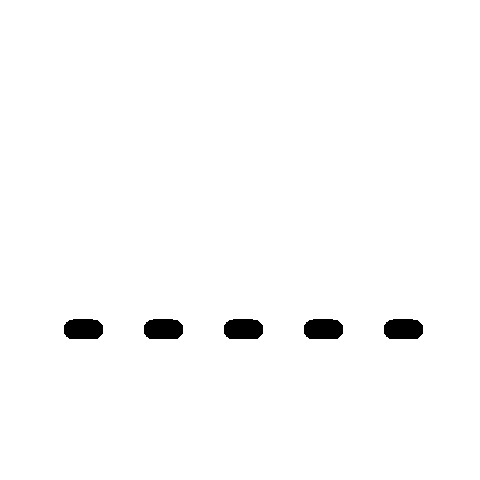}), 1~(\protect\includegraphics[height = 1em]{./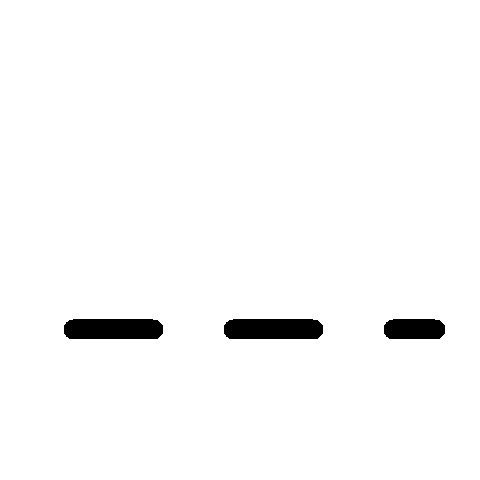}), and 2~(\protect\includegraphics[height = 1em]{./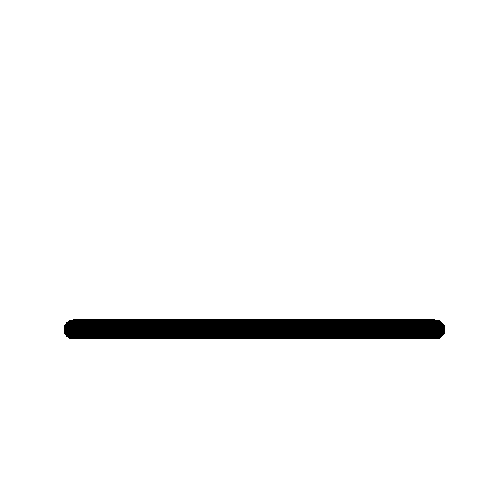}). Top row ranges for all $m$; bottom row ranges for $m\le 10$. }
		\label{fig:r6}
	\end{figure}
	
	As expected, when we hold $r$ and $k$ constant, RMSE tends to decrease towards a limit (specific to $n$) as either $m$ and $N$ increase (Figure 2). We find evidence that the limiting RMSE we approach for $n=40$ and $r=0$ (Figure 2a) is greater than the limiting RMSE for $n=40$ and $r=1$ (Figure 2b). Graphically, it is difficult to see that the limit RMSE for $n=40$ and $r=2$ is in fact the smallest of the three scenarios. This decreasing trend in limiting behavior as $r$ increases holds for other values of $n$, which is not surprising since the underlying exponential function has high order smoothness. Furthermore, we note that as we interpolate at an order close to our strongest assumption of the smoothness, i.e. $k\rightarrow r$, we require smaller $m$ to get close to the limiting RMSE. In Figure 2c when $r=2$, the linear ($k=1$) and quadratic ($k=2$) interpolator are equally as close to the limiting RMSE by $m=6$. Indeed, this fast convergence by the linear interpolator suggests that if we assume sufficient smoothness for a problem, or large enough $r$, we can achieve potentially optimal rates using a linear interpolator with a modest number of knots.     
	
	\subsection{ Simulations for Multivariate \method: }
	
	\begin{figure} 
		\centering
		\subfloat[ True $f$ ]{\includegraphics[height=1.5in,width=1.9 in]{./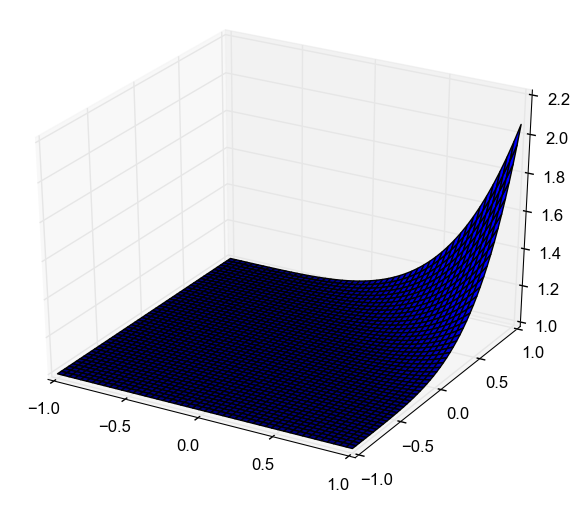}} 
		\subfloat[ $\boldsymbol{m}=(4,4)$ ]{\includegraphics[height=1.5in,width=1.9 in]{./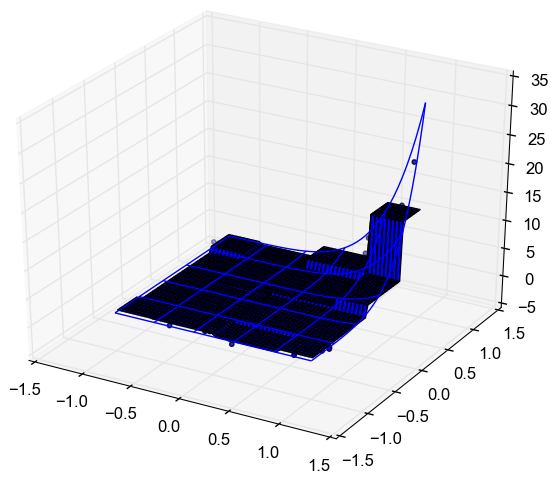}} 
		\subfloat[ $\boldsymbol{m}=(15,15)$ ]{\includegraphics[height=1.5in,width=1.9 in]{./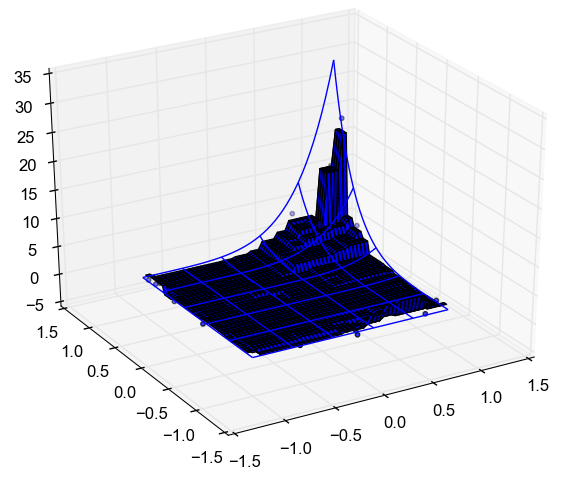}} 
		\caption{We observe $n=100$ noisy observations (transparent) of a bivariate exponential function shown in (a) and wireframe. We draw \method~fits (blue) using $\{\boldsymbol{r}_1=(1,1), \boldsymbol{r}_2=(1,0), \boldsymbol{r}_3=(0,1)\}$ and $k=0$. }
		\label{fig:mv_fl}
	\end{figure}
	
	In the univariate case, we see that \method~resembles many other univariate methods and so we spent time showing how \method~competes. However, in the multivariate setting, $\methodsym$ approximate problems previously thought to be intractable. Let us consider solving over the functions of bounded variation in $\mathbb{R}^2$, i.e 
	\[
	\mathcal{F}=\{f: \mathbb{R}^2\rightarrow \mathbb{R}, P(f)\le M\}
	\]
	, where
	\[
	P(f) = \int \left|\frac{\partial}{\partial x_1}f(x_1,x_2)\right| dx_1 +\int \left|\frac{\partial}{\partial x_2}f(x_1,x_2)\right| dx_2 + \int \left|\frac{\partial^2}{\partial x_1 \partial x_2}f(x_1,x_2)\right| dx_1 dx_2.
	\]
	\citep{geer2000empirical} derived uniform convergence rates for least squares estimates, $\hat{f}^{LS}$, of $f\in \mathcal{F}$. Using \method~with $0$th-order interpolation, we can approximate this bivariate bounded variation problem. In Figure~\ref{fig:mv_fl}, we show \method~fits approximating the bivariate fused lasso for an exponential function. \method~aims to capture the complexity of the exponential curve using $m=15$, i.e. a 15 by 15 mesh on the predictor space. In Figure~\ref{fig:sims_mv_fl}, we show results for a simulation study using much larger sample sizes and a wide range of $m$. For this smooth exponential function, using as small as $m=40$ begins to produce near-optimal fits for large sample sizes.          

	\begin{figure} 
		\centering
		\includegraphics[height=2in,width=2.5 in]{./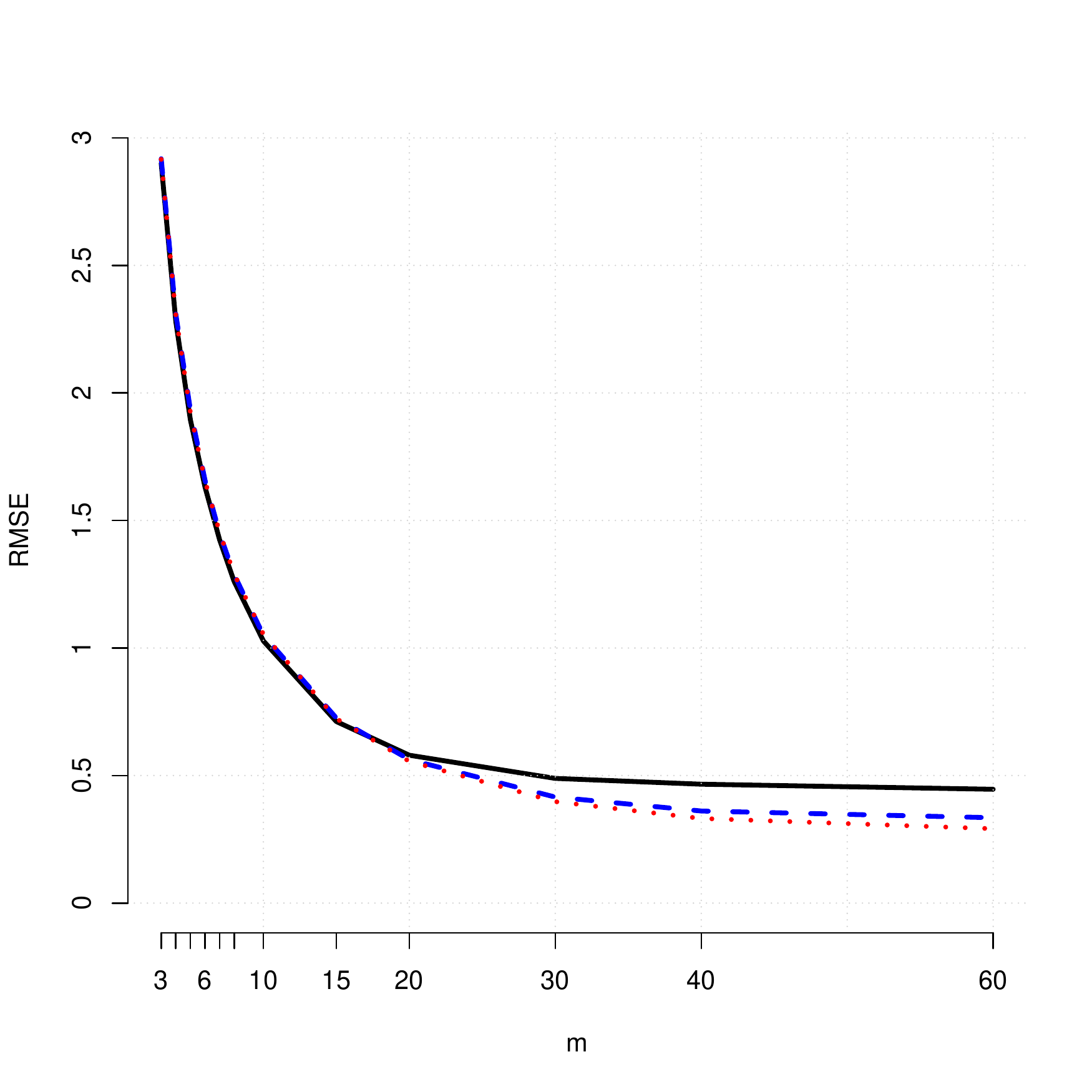} 
		\caption{We simulated $n=1,000$ (\protect\includegraphics[height = 1em]{./Figures_v2/solid.png}), $\textcolor{blue}{5,000}$ (\protect\includegraphics[height = 1em]{./Figures_v2/dashed.png}), $\textcolor{red}{10,000}$ (\protect\includegraphics[height = 1em]{./Figures_v2/dotted.png}) noisy observations of a bivariate exponential function shown in Figure~\ref{fig:mv_fl}. For each of 500 simulations, we approximated the bivariate fused lasso using \method~over a range of $m$. }
		\label{fig:sims_mv_fl}
	\end{figure}
	
	\section{\method~Theory}
	
	In this section, we discuss theoretical properties of the \method~estimate. We introduce some notation while making clear the objectives we are comparing. The \method~estimate of $f^*$ on a mesh $D$ is given by $\methodsym$ where $\tilde{f_D}$ minimizes the following problem:
	\begin{equation}
	\tilde{f_D} = \operatorname{argmin}_{f_D\in\mathbb{R}^m}L_D(f_D),
	\end{equation}
	where
	\[
	L_D\left(f_D\right) = \frac{1}{n}\sum_{i=1}^n\left(y_i - \Oxi \right)^2 + \lambda_n P_D\left(f_D\right).
	\]
	We would like \method~to be comparable to the solution of a total variation problem, $\hat{f}$, given by
	\begin{equation}
	\hat{f} = \operatorname{argmin}_{f\in\mathcal{F}}L\left(f\right)
	\end{equation}
	where 
	\[
	L(f)= \frac{1}{n}\sum_{i=1}^n\left(y_i - f\left(x_{i}\right)\right)^2 + \lambda_n P\left(f\right).
	\]
	\method~is sub-optimal in the sense that
	\[
	L\left(\hat{f}\right) \le L\left(\methodsym\right). 
	\] 
	However, under certain conditions, \method~can be as optimal as penalized regression exact solutions, as we show next. We still assume $P(\cdot)$ takes form of a Sobolev penalty. For this section, we introduce a gridding function: for any function $g$, we define a gridding function $D: \mathcal{F} \rightarrow \mathbb{R}^m$ such that for $g\in \mathcal{F}$, $D(g) = \left(g(d_1),\ldots,g(d_m)\right)^\top$. 
	
	\begin{lemma}
		\emph{(Sub-Optimality Inequality)}
		\label{lemma_subopt}
		For all $f \in \mathcal{F}$, suppose there exist $\delta_m$ and $\epsilon_m$ such that
		\begin{equation}\label{riemannbound}
		\sup_{ f \in \mathcal{F}}\left|P_D\left(D\left(f\right)\right)-P(f)\right| \le \epsilon_m
		\end{equation}
		and
		\begin{equation}\label{interpbound}
		\sup_{ f \in \mathcal{F}}|\Omega_x\left(D\left(f\right)\right)-f(x)| \le \delta_m.
		\end{equation}
		
		With $\lambda_n>0$, we have
		\begin{gather}\label{ineq:subopt}
		L\left(\methodsym\right) \le \min \begin{cases}
		3L(\hat{f}) + O_P\left(\delta_m^2\vee \epsilon_m\lambda_n\right) \\
		L(\hat{f}) + O_P\left(C\delta_m \vee \epsilon_m\lambda_n\right)
		\end{cases}.
		\end{gather}
	\end{lemma}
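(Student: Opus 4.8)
\emph{Proof strategy.} The plan is to route the comparison through the common quantity $L_D(\tilde f_D)$ and to exploit that $\tilde f_D$ minimizes $L_D$. Write $R(f)=\tfrac1n\sum_{i=1}^n(y_i-f(x_i))^2$ for the empirical squared error, so that $L(f)=R(f)+\lambda_n P(f)$, and set $g=\methodsym$, the interpolated \method~fit, which lies in $\mathcal F$ by construction. I would establish the chain
\[
L(g)\;\le\;L_D(\tilde f_D)+\lambda_n\epsilon_m\;\le\;L_D\!\big(D(\hf)\big)+\lambda_n\epsilon_m\;\le\;\big(\text{bound in } L(\hf)\big),
\]
where the first inequality is a penalty-approximation step, the middle inequality is the optimality of $\tilde f_D$ evaluated at the feasible mesh vector $D(\hf)\in\mathbb R^m$, and the last is the substantive estimate.

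For the first inequality I would use that the interpolators of Section~\ref{method} reproduce their knot values, i.e. $\Omega_{d_j}(f_D)=(f_D)_j$, so that $D(\Omega(\tilde f_D))=\tilde f_D$. Since $g(x_i)=\Omega_{x_i}(\tilde f_D)$ by definition, the squared-error terms of $L(g)$ and $L_D(\tilde f_D)$ coincide \emph{exactly}; the objectives differ only through their penalties, and applying~\eqref{riemannbound} to $g$ (with $\tilde f_D=D(g)$) gives $|P(g)-P_D(\tilde f_D)|\le\epsilon_m$, hence the $\lambda_n\epsilon_m$ gap. The middle inequality is immediate from the definition of $\tilde f_D$.

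The crux is the final step, bounding $L_D(D(\hf))$ by $L(\hf)$. Writing $e_i=\Omega_{x_i}(D(\hf))-\hf(x_i)$, bound~\eqref{interpbound} gives $|e_i|\le\delta_m$ and bound~\eqref{riemannbound} gives $|P_D(D(\hf))-P(\hf)|\le\epsilon_m$. Expanding the square,
\[
L_D\!\big(D(\hf)\big)=L(\hf)-\tfrac2n\textstyle\sum_i(y_i-\hf(x_i))e_i+\tfrac1n\sum_i e_i^2+\lambda_n\big(P_D(D(\hf))-P(\hf)\big),
\]
so $\tfrac1n\sum_i e_i^2\le\delta_m^2$ and the penalty discrepancy $\le\lambda_n\epsilon_m$ are harmless, and everything hinges on the cross term, controlled by Cauchy--Schwarz as $\tfrac2n\sum_i|y_i-\hf(x_i)|\,|e_i|\le 2\delta_m\sqrt{R(\hf)}$. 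The two branches correspond to the two standard ways of closing this. Applying Young's inequality $2\delta_m\sqrt{R(\hf)}\le 2R(\hf)+\tfrac12\delta_m^2$ turns the leading $R(\hf)$ into $3R(\hf)$; keeping the penalty at coefficient one and using $3R(\hf)+\lambda_n P(\hf)\le 3L(\hf)$ yields the multiplicative branch $3L(\hf)+O_P(\delta_m^2\vee\epsilon_m\lambda_n)$. Alternatively, bounding $\sqrt{R(\hf)}\le C$ leaves the cross term as $2C\delta_m$, giving the additive branch $L(\hf)+O_P(C\delta_m\vee\epsilon_m\lambda_n)$; taking the smaller of the two completes the argument.

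\emph{Main obstacle.} The only genuinely delicate term is this cross term, the one piece coupling the random residuals of $\hf$ to the interpolation error; the choice between absorbing it via Young's inequality or bounding it by a constant is precisely what produces the two regimes. For the additive branch one must additionally justify $\sqrt{R(\hf)}=O_P(1)$ (hence $\le C$ with high probability), which follows from $R(\hf)\le L(\hf)\le L(f^*)=\tfrac1n\sum_i w_i^2+\lambda_n P(f^*)=O_P(1)$ under the stated moment assumptions, whereas the multiplicative branch needs no such control. A secondary point to verify is the reproducing property $D(\Omega(f_D))=f_D$ invoked in the first inequality; absent it, one would instead need a separate bound relating $P_D(\tilde f_D)$ directly to $P(\Omega(\tilde f_D))$.
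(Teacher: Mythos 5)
Your proposal is correct and follows essentially the same route as the paper's proof: compare through $L_D(\tilde f_D)\le L_D(D(\hf))$, use the fact that the data-fit terms of $L(\methodsym)$ and $L_D(\tilde f_D)$ coincide so only a $\lambda_n\epsilon_m$ penalty gap remains, and control the Cauchy--Schwarz cross term $2\delta_m\|y-\hf\|_n$ either by absorbing it into $L(\hf)$ (giving the factor $3$) or by bounding $\|y-\hf\|_n\le C$ via $L(\hf)\le L(f^*)$. Your explicit verification of the knot-reproducing property $D(\Omega(\tilde f_D))=\tilde f_D$ is a detail the paper leaves implicit, but the argument is the same.
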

	\begin{proof}
		See supplement.
	\end{proof}
	
	Thus, Lemma~\ref{lemma_subopt} shows \method~can perform as well as the exact penalized regression solution depending on the penalty approximation error~\eqref{riemannbound} and interpolation error~\eqref{interpbound}. To prove a theorem about a rate of convergence for \method, we utilize Lemma~\ref{lemma_subopt} and entropy theory. Let $H\left(\delta, \mathcal{F}, Q_n\right)=\operatorname{log} N\left(\delta, \mathcal{F}, Q_n \right)$ denote the $\delta$-entropy of $\mathcal{F}$ for the $L_2(Q_n)$-metric, where $N\left(\delta, \mathcal{F}, Q_n \right)$ is the $\delta$-covering number. With $0<\alpha<2$, we suppose that 
	\begin{gather}\label{eq:EB}
	H\left(\delta,\{f\in \mathcal{F}: P(f)\le 1 \},Q_n\right) \le c\delta^{-\alpha},
	\end{gather}
	e.g. this holds for many functional classes, such as functions with bounded total variation or in Sobolev spaces. Using these entropy conditions, we show in the following theorem that \method~achieves a similar rate of convergence as the exact solution of~\eqref{eq:pen}: 
	
	\begin{theorem}
		\emph{(Rate of Convergence)}
		\label{lem:roc1}
		Assume standard entropy conditions on $\alpha$ given in the supplement. Let $P(f^*)>0$ and $\lambda_n=O_p\left(n^{-\frac{2}{2+\alpha}}\right)$. If 
		\begin{gather}
		L\left( \methodsym \right) \le L(\hat{f})+\Gamma_{n,m},
		\end{gather}
		then we have 
		\begin{gather}\label{eq:result1}
		\left\|\methodsym-f^*\right\|_n^2 =O_p\left(\lambda_n + 6\Gamma_{n,m} \right).
		\end{gather}
	\end{theorem}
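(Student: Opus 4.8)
The plan is to run the classical ``basic inequality'' argument for penalized least squares (in the style of \citep{geer2000empirical}), but starting from the suboptimality hypothesis rather than from a true minimization property. First I would note that since $\hf$ minimizes $L$ over $\mathcal{F}$ and $f^*\in\mathcal{F}$, we have $L(\hf)\le L(f^*)$; chaining this with the assumed inequality $L(\methodsym)\le L(\hf)+\Gamma_{n,m}$ gives $L(\methodsym)\le L(f^*)+\Gamma_{n,m}$. Substituting $y_i=f^*(x_i)+w_i$ and expanding both squared-error terms, the $\frac1n\sum w_i^2$ pieces cancel and I obtain the basic inequality
\[
\|\methodsym - f^*\|_n^2 + \lambda_n P(\methodsym) \le 2\nu_n(\methodsym - f^*) + \lambda_n P(f^*) + \Gamma_{n,m},
\]
where $\nu_n(g)=\frac1n\sum_{i=1}^n w_i\, g(x_i)$ is the empirical noise process and I have used that $\methodsym\in\mathcal{F}$ (the interpolator maps into $\mathcal{F}$, so the entropy bound applies to $g=\methodsym-f^*$).

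The heart of the argument is controlling $\nu_n(\methodsym - f^*)$ uniformly over the data-dependent estimate. Writing $g=\methodsym-f^*$, the entropy bound \eqref{eq:EB} (together with the finite-dimensional null space of $P$, handled by a standard parametric argument) yields, via Dudley's entropy integral and a concentration/peeling argument over dyadic shells of $\|g\|_n$ and $P(g)$, that on an event of probability tending to one,
\[
|\nu_n(g)| \le C\, n^{-1/2}\,\|g\|_n^{1-\alpha/2}\,\bigl(P(g)+1\bigr)^{\alpha/2}
\]
for all relevant $g$; the exponent $1-\alpha/2>0$ is exactly what makes the entropy integral $\int_0^\rho \delta^{-\alpha/2}\,d\delta$ finite for $0<\alpha<2$. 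I would then bound $P(g)\le P(\methodsym)+P(f^*)$ by the triangle inequality for the seminorm $P$.

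Finally, I would insert this empirical-process bound into the basic inequality and peel apart the product by weighted Young's inequality: the factor $\|g\|_n^{1-\alpha/2}$ is absorbed into a fraction of $\|\methodsym-f^*\|_n^2$, the factor $P(\methodsym)^{\alpha/2}$ into a fraction of $\lambda_n P(\methodsym)$, and the leftover $n^{-1/2}$ factor produces a deterministic remainder of order $n^{-2/(2+\alpha)}=\lambda_n$ precisely because $\lambda_n=O_p(n^{-2/(2+\alpha)})$ balances the two sides (the $P(f^*)$ term, fixed and finite since $P(f^*)>0$, contributes a further $O(\lambda_n)$). Choosing the Young weights so that the reabsorbed $\|g\|_n^2$ and $\lambda_n P(\methodsym)$ coefficients sum to less than one, I can move them to the left-hand side and conclude $\|\methodsym-f^*\|_n^2 = O_p(\lambda_n + 6\Gamma_{n,m})$, the constant $6$ emerging from explicit tracking of the Young weights and the leading factor $2$ on $\nu_n$. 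The main obstacle is the second step: obtaining a uniform handle on $\nu_n(\methodsym-f^*)$ when $P(\methodsym)$ is itself random --- this is what forces the peeling argument and is the reason one cannot simply substitute a fixed envelope into Dudley's bound; it is also where the restriction $0<\alpha<2$ and the nondegeneracy $P(f^*)>0$ are genuinely used.
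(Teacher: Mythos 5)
Your proposal is correct and follows essentially the same route as the paper: chain the suboptimality hypothesis through $L(\hat f)\le L(f^*)$ to get a basic inequality with an extra $\Gamma_{n,m}$ term, then control $(w,\methodsym-f^*)_n$ via the entropy-based increment bound of \cite{geer2000empirical} (Lemma 8.4/Theorem 10.2) and absorb terms to extract the rate $\lambda_n$. The only cosmetic difference is bookkeeping for $\Gamma_{n,m}$ --- the paper splits into cases via a $3\max\{\cdot,\cdot,\cdot\}$ bound (whence the constant $6$), while you carry it additively through the Young-inequality step; both yield the stated conclusion.
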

	\begin{proof}
		See supplement. 
	\end{proof}
	
	Based on Lemma~\ref{lemma_subopt}, we know that $\Gamma_{n,m}=O_P(\delta_m \vee \epsilon_m\lambda_n)$. Since $\lambda_n\rightarrow 0$ and $\epsilon_m\rightarrow 0$, we are not concerned about $\epsilon_m\lambda_n$. Hence, in this case, $\Gamma_{n,m}$ is the excess error due to interpolation. Ideally, $\lambda_n\ge \delta_m$ so that
	\[
	\left\|\methodsym-f^*\right\|_n^2 =O_p\left(\lambda_n\right),
	\] 
	i.e. a \method~estimate attains the optimal rate. 
	
	\subsection{ \method~Convergence Rates using Polynomial Interpolators: }
	
	In the simulations of this paper, we used a polynomial interpolator, i.e. MLP, to fit the data. It is not difficult to derive point-wise rates for the interpolation error from an interpolating polynomial of $k$th degree on a regular $m$-mesh (see supplement): 
	\begin{equation}
	|\Omega_{x_i}\left(D(f)\right)-f(x_i)| = O(m^{-(k+1)}).
	\end{equation}
	Ideally, the cost of fitting on a mesh is dominated by the cost of the penalized regression problem, $\lambda_n$, i.e.
	\begin{align*}
	n^{-\frac{2}{2+\alpha}} & >  m^{-(k+1)}.
	\end{align*}
	Fortunately, modestly grown $m$ achieves this property. Conservatively, if $m>n^{\frac{1}{k+1}}$, then $\lambda_n>n^{-1}>m^{-(k+1)}$. In Figure~\ref{fig:r6}, at a fixed $k$ regardless of $r$, we saw no further changes in RMSE by $m>n^{\frac{1}{k+1}}$, which may indicate that the \method~estimates have achieved the same limiting RMSE as exact solutions.
	
	\section{Conclusion}
	
	It can be intractable to solve the exact problem given by~\eqref{eq:pen}. We have introduced an inexact problem, the \method~objective given by \eqref{eq:method}, whose calculable solutions efficiently approximate~\eqref{eq:pen}. \Method approximates~\eqref{eq:pen} via discretization: we interpolate $f_D$, i.e. $\Omega_{x_i}(f_D)$, to fit $f(x_i)$; and we approximate $P(f)$ with $P_D(f_D)$. Using simple interpolators and Riemann sums of differences, \method~allows us to solve previously intractable problems especially in the multivariate setting. In future work, we hope to study the behaviour of \method~estimates in the high dimensional setting as well as prove uniform rates for the interpolation error or the Riemann approximation error.  
	
	\bigskip
	\begin{center}
		{\large\bf SUPPLEMENTAL MATERIALS}
	\end{center}

	\section{Notation and Theory}
	
	\subsection{Matrix Notation of Univariate/Bivariate \method}
	
	We denote a first order difference matrix by
	\begin{gather}
		\Delta^{(1)}_n  =
		\begin{pmatrix}
			-1 & 1 & 0 & \ldots  & 0 & 0 \\
			0 & -1 & 1 & \ldots & 0 & 0 \\
			\vdots &  &  &  & & \\
			0 & 0 & 0 & \ldots & -1 & 1 	
		\end{pmatrix} \in \mathbb{R}^{(n-1)\times n}.
	\end{gather} 
	The $r$th order difference matrix is defined recursively as follows:
	\begin{gather}
		\Delta^{(r)}_{n} = \Delta^{(r-1)}_{n-1} \cdot \Delta^{(1)}_n =\Delta^{(1)}_{n-r+1}\cdot \Delta^{(1)}_{n-r} \cdot \ldots \cdot \Delta^{(1)}_n \in \mathbb{R}^{(n-r)\times n}.
	\end{gather}
	We define an averaging operator matrix by 
	
	\begin{gather}
		A^k_{n} = \frac{1}{r+1} 
		\begin{pmatrix}
			\bovermat{(r+1)-many}{1 & 1 & \ldots &  1} & 0 & \ldots & 0 \\
			0 & 1 & \ldots & 1 & 1 & \ldots & 0 \\
			\vdots &  &  &  & &  & \\
			0 & 0 & 0 & \ldots & 1 & 1 & 1 	
		\end{pmatrix} \in \mathbb{R}^{(n-r)\times n}.
	\end{gather} 
	$A_m^0$ gives the identity matrix.
	
	Suppose $p=1$ and we choose an $m$-mesh $D$ (not necessarily regular). We use finite-differences/Riemann sums to approximate the Sobolev norm denoted by $P(f)$:
	\begin{gather}\label{meshy_mat}
		P_D(f_D) = \left\|\bar{\Delta}_{m}^{(r)} f_D\right\|_\ell^\ell
	\end{gather}
	where $f_D = \left[f\left(d_1\right),\ldots, f\left(d_m\right)\right]^{\top}$ and $\bar{\Delta}_{m,D}^{(r)}$ is an $r$th order normalized difference over $D$:
	\begin{gather}\label{normdiffmat}
		\bar{\Delta}_{m}^{(r)} = \left(\tilde{\Delta}_{m}^{(r)} \right)^{1/\ell} \left((\tilde{\Delta}_{m}^{(r)})^{-1}\Delta_{m-r}^{(1)}\right) \left((\tilde{\Delta}_{m}^{(r-1)})^{-1}\Delta_{m-r-1}^{(1)}\right)	\ldots \left((\tilde{\Delta}_{m}^{(1)})^{-1}\Delta_{m-1}^{(1)}\right)(\tilde{\Delta}_{m}^{(0)})^{-1}\Delta_{m}^{(1)},
	\end{gather}
	with
	\[
	\tilde{\Delta}_{m}^{(r)}=\operatorname{diag}\left(\Delta_{m-r}^{(1)} A_m^r D\right)\in \mathbb{R}^{(m-r-1)\times (m-r-1)}.
	\]
	$A_m^rD$ produces the averages of the $r+1$ adjacent values of the mesh. $\Delta_{m-r}^{(1)}A_m^rD$ gives the $r$th order differences of the $r$ adjacent mesh point averages.
	
	Through this generalized matrix formulation, we can formulate the approximating norms with some algebra. For example, with $r=1$ and $\ell=2$, we have
	\[
	P_D(f_D) = \sum_{i=1}^{m-2} \frac{\left( \frac{f(d_{i+2})-f(d_{i+1})}{d_{i+2}-d_{i+1}}-\frac{f(d_{i+1})-f(d_{i})}{d_{i+1}-d_{i}}\right)^2}{\frac{d_{i+2}-d_i}{2}}.
	\]  
	
	On a regular $m$-mesh $D$ with mesh widths $\delta$, \eqref{normdiffmat} reduces nicely. For integers $r\ge 0$ and $\ell>0$, our Riemann approximation to $P(f)$ takes the following form:
	\begin{gather}
		P_D(f_D) = \left\|\delta^{\frac{1}{\ell}-r}\Delta_{m}^{(r)} f_D\right\|_\ell^\ell.
	\end{gather}
	
	In the bivariate case with regular meshes chosen for each covariate, we arrive at simple expressions of the Riemann approximation. Let $D=(D_1,D_2)$ denote regular meshes for covariates $x_1$ and $x_2$, respectively, i,e. $\boldsymbol{m}=(m_1,m_2)$. Define $\theta\in \mathbb{R}^{\boldsymbol{m}}$ such that 
	\begin{gather}
		\theta  =
		\begin{pmatrix}
			f(d_{1,1},d_{2,1}) & f(d_{1,2},d_{2,1}) & \ldots  & f(d_{1,m_1},d_{2,1}) \\
			f(d_{1,1},d_{2,2}) & f(d_{1,2},d_{2,2}) & \ldots  & f(d_{1,m_1},d_{2,2}) \\
			\vdots &  &  & \\
			f(d_{1,1},d_{2,m_2}) & f(d_{1,2},d_{2,m_2}) & \ldots  & f(d_{1,m_1},d_{2,m_2})	
		\end{pmatrix}.
	\end{gather} 
	Furthermore, let $\boldsymbol{r}=(r_1,r_2)$ denote the partials we seek to estimate. Similar to previous notation, but with sub-indices for the covariates, the $r_j$th-order differences for the $j$th variable can be calculated through $\Delta^{(r)}_{m_j}\theta^{[j]}$, where $\theta^{[1]}=\theta$ and $\theta^{[2]}=\theta^\top$. In this bivariate case with pure partials, i.e. taking differences only for one covariate or isotropic differences, with mesh widths denoted by $\boldsymbol{\delta}=(\delta_1,\delta_2)$,  
	\[
	P_D\left(f_D\right) = \left\|\delta_j^{\frac{1}{\ell}-r_j}\delta_{j'}^{\frac{1}{\ell}}\Delta^{(r_j)}_{m_j}\theta^{[j]}\right\|_\ell^\ell,
	\] 
	where $j=1,2$ and $j'=1,2$ ($j \ne j'$). 
	
	For mixed partials or anisotropic derivatives with $\boldsymbol{r}=(r_1,r_2)$, we can calculate the approximating differences using 
	\[
	\Delta_{\boldsymbol{m}}^{(\boldsymbol{r})}\theta=\Delta_{m_1}^{(r_1)}\theta (\Delta_{m_2}^{r_2})^\top
	\] 
	or 
	\[
	\Delta_{\boldsymbol{m}}^{(\boldsymbol{r})}\theta =\Delta_{m_2}^{(r_2)}\theta^\top (\Delta_{m_1}^{(r_1)})^\top.
	\] 
	Thus, we can estimate $P(f)$ with 
	\[
	P_D\left(f_D\right) = \left\|\delta_1^{\frac{1}{\ell}-r_1}\delta_{2}^{\frac{1}{\ell}-r_2}\Delta_{\boldsymbol{m}}^{(\boldsymbol{r})}\theta\right\|_\ell^\ell.
	\]
	In the general bivariate case, for $\{\boldsymbol{r}_1,\ldots,\boldsymbol{r}_S\}$, 
	\[
	P_D\left(f_D\right)= \sum_{s=1}^{S}\left\|\delta_1^{\frac{1}{\ell}-r_{1,s}}\delta_{2}^{\frac{1}{\ell}-r_{2,s}}\Delta_{\boldsymbol{m}}^{(\boldsymbol{r}_s)}\theta\right\|_\ell^\ell
	\]
	
	\subsection{Rising Polynomial Basis}
	
	In this section, we define the $K$th order rising polynomial basis. Suppose we define a mesh $D=\left(d_1,\ldots,d_m\right)$. For some integer $Q$, set $m=1+QK$. Let $T=\left(d_{1+K},d_{1+2K},\ldots,d_{1+(Q-1)K}, d_{m}\right)=\left(t_1,\ldots,t_{Q}\right)$. The $K$th order rising polynomial has basis elements given by
	\[
	\psi_1(x)=1, \psi_2(x)=x,\ldots, \psi_{K+1}(x)=x^K, \mbox{ and } \psi_{q,k}(x)=(x-t_{q+1})^k_+-(x-t_{q})^k_+,
	\]
	where $q=1,\ldots,Q$ and $k=1,\ldots,K$.  
	
	\subsection{ Proofs of Theoretical Results }
	
	\subsubsection{Sub-Optimality Lemma}
	
	First, we introduce notation for the estimators compared in this section. The exact solution of PR is given by:
	\begin{equation}
		\hat{f} = \operatorname{argmin}_{f\in\mathcal{F}}L\left(f\right)
	\end{equation}
	where 
	\[
	L(f)= \frac{1}{n}\sum_{i=1}^n\left(y_i - f\left(x_{i}\right)\right)^2 + \lambda_n P\left(f\right).
	\]
	We refer to $L(f)$ as the functional loss of $f$. 
	
	On a mesh $D$, we solve our approximation to penalized regression:  
	\begin{equation}
		\tilde{f_D} = \operatorname{argmin}_{f\in\mathcal{F}}L_D(f_D),
	\end{equation}
	where
	\[
	L_D\left(f_D\right) = \frac{1}{n}\sum_{i=1}^n\left(y_i - \Oxi \right)^2 + \lambda_n P_D\left(f\right).
	\]
	The functional estimate, i.e. \method, is given by $\meshy$. 
	
	At best, \method, $\meshy$, approximates the solution of a penalized regression problem, $\hat{f}$. \method is sub-optimal in the sense that
	\[
	L\left(\hat{f}\right) \le L\left(\meshy\right). 
	\] 
	However, under certain conditions, it can be as optimal as penalized regression, as we show next. Let us assume $P(\cdot)$ takes form of a Sobolev penalty. 
	
	\begin{lemma}
		\emph{(Sub-Optimality Inequality)}
		\label{ineq:suboptshort}
		For all $f \in \mathcal{F}$, suppose there exist $\delta_m$ and $\epsilon_m$ such that
		\begin{equation}
			\sup_{ f \in \mathcal{F}}\left|P_D\left(D\left(f\right)\right)-P(f)\right| \le \epsilon_m
		\end{equation}
		and
		\begin{equation}
			\sup_{ f \in \mathcal{F}}|\Omega_x\left(D\left(f\right)\right)-f(x)| \le \delta_m.
		\end{equation}
		
		With $\lambda_n>0$, we have
		\begin{gather}\label{ineq:subopt2}
			L\left(\meshy\right) \le \min \begin{cases}
				3L(\hat{f}) + O_P\left(\delta_m^2\vee \epsilon_m\lambda_n\right) \\
				L(\hat{f}) + O_P\left(C\delta_m \vee \epsilon_m\lambda_n\right)
			\end{cases}.
		\end{gather}
	\end{lemma}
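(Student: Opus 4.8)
The plan is to exploit the fact that $\tilde{f}_D$ is the exact minimizer of $L_D$ over $\mathbb{R}^m$, so that $D(\hat{f})$ --- the gridding of the exact penalized-regression solution --- is an admissible (generally suboptimal) competitor. This yields the one-line comparison
\[
L_D(\tilde{f}_D) \le L_D\bigl(D(\hat{f})\bigr),
\]
and the entire argument reduces to two transfers controlled by $\epsilon_m$ and $\delta_m$: (i) passing from $L_D(\tilde{f}_D)$ up to the functional loss $L(\meshy)$ of the interpolated estimate, and (ii) passing from $L_D(D(\hat{f}))$ down to $L(\hat{f})$.

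For step (i) the key observation is that the squared-error terms of $L(\meshy)$ and of $L_D(\tilde{f}_D)$ are literally identical, since both evaluate the interpolated fit $\Omega_{x_i}(\tilde{f}_D)$ at the data points. Their difference is therefore only in the penalty, $L(\meshy) - L_D(\tilde{f}_D) = \lambda_n\bigl[P(\meshy) - P_D(\tilde{f}_D)\bigr]$. Applying the penalty-approximation hypothesis \eqref{riemannbound} to the admissible function $\meshy \in \mathcal{F}$, and using that the interpolator reproduces the mesh values so that $D(\meshy) = \tilde{f}_D$ and hence $P_D(D(\meshy)) = P_D(\tilde{f}_D)$, gives $|P(\meshy) - P_D(\tilde{f}_D)| \le \epsilon_m$. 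Thus $L(\meshy) \le L_D(\tilde{f}_D) + \lambda_n \epsilon_m$.

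For step (ii) I would expand $L_D(D(\hat{f}))$. The penalty term obeys $\lambda_n P_D(D(\hat{f})) \le \lambda_n P(\hat{f}) + \lambda_n \epsilon_m$ directly from \eqref{riemannbound}. For the data-fit term write $y_i - \Omega_{x_i}(D(\hat{f})) = \hat{r}_i - e_i$, where $\hat{r}_i = y_i - \hat{f}(x_i)$ and the interpolation error $e_i = \Omega_{x_i}(D(\hat{f})) - \hat{f}(x_i)$ satisfies $|e_i| \le \delta_m$ by \eqref{interpbound}. The two cases of the bound arise from two expansions of $\tfrac1n\sum_i(\hat{r}_i - e_i)^2$. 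The crude inequality $(\hat{r}_i - e_i)^2 \le 2\hat{r}_i^2 + 2 e_i^2$ gives $\tfrac1n\sum(\hat r_i - e_i)^2 \le 2\cdot\tfrac1n\sum\hat r_i^2 + 2\delta_m^2$, and since $2\cdot\tfrac1n\sum\hat r_i^2 + \lambda_n P(\hat f) \le 3 L(\hat f)$ (both summands being nonnegative), this produces the first case $L(\meshy) \le 3L(\hat f) + O_P(\delta_m^2 \vee \epsilon_m\lambda_n)$. The sharper expansion keeps the cross term, $\tfrac1n\sum(\hat r_i - e_i)^2 = \tfrac1n\sum\hat r_i^2 - \tfrac2n\sum\hat r_i e_i + \tfrac1n\sum e_i^2$, where Cauchy--Schwarz bounds the middle term by $2\delta_m\bigl(\tfrac1n\sum\hat r_i^2\bigr)^{1/2}$ and the last by $\delta_m^2$; this yields the second case with $C = 2\bigl(\tfrac1n\sum\hat r_i^2\bigr)^{1/2}$.

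Finally I would certify that these constants are genuinely $O_P(1)$: since $\hat{f}$ minimizes $L$ and $f^*\in\mathcal{F}$, we have $L(\hat f) \le L(f^*) = \tfrac1n\sum w_i^2 + \lambda_n P(f^*)$, which is $O_P(1)$ because $\tfrac1n\sum w_i^2 \to \sigma^2$ and $\lambda_n P(f^*) \to 0$. Hence $\tfrac1n\sum\hat r_i^2 \le L(\hat f) = O_P(1)$, so $C = O_P(1)$ and the lower-order $\delta_m^2$ term is absorbed; chaining step (i), the minimizer inequality, and step (ii), then taking the smaller of the two bounds, completes the proof. I expect the only delicate point to be the reproduction identity $D(\meshy) = \tilde{f}_D$ underlying step (i): it is precisely what lets the penalty hypothesis, phrased for \emph{gridded} functions, apply to the \emph{interpolated} estimate, and it requires checking both that the chosen interpolator passes through the mesh values and that $\meshy \in \mathcal{F}$ so that $P(\meshy)$ is well defined.
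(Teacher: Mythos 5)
Your proposal is correct and follows essentially the same route as the paper's proof: compare $L_D(\tilde{f}_D)\le L_D(D(\hat{f}))$, transfer the penalty via the $\epsilon_m$ hypothesis, expand the data-fit term of $L_D(D(\hat{f}))$ with Cauchy--Schwarz to get the sharper bound and with the crude quadratic inequality to get the $3L(\hat{f})$ bound, and certify $C=O_P(1)$ from $L(\hat{f})\le L(f^*)$. The one point you make explicit that the paper buries in ``after some algebra''---the reproduction identity $D\left(\methodsym\right)=\tilde{f}_D$ needed to apply the gridded-penalty hypothesis to the interpolated estimate---is a genuine and worthwhile clarification, not a different argument.
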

	
	\begin{proof}
		We know $L_D\left(\tilde{f}_D\right)\le L_D\left(D\left(\hat{f}\right)\right)$. After some algebra, we find
		\begin{align}
			L_D\left(\tilde{f}_D\right) & = L\left(\meshy\right) + \lambda_n\left(P_D\left(\tilde{f}_D\right) -P\left(\meshy\right)\right).
		\end{align}
		With similar algebra and by applying Cauchy-Schwarz Inequality, we show
		\begin{align}
			L_D\left(D\left(\hat{f}\right)\right) &\le L\left(\hat{f}\right) + \delta_m^2 +2\|y-\hat{f}\|_n\|\Omega(D(\hat{f}))-\hat{f}\|_n+ \epsilon_m\lambda_n
		\end{align}
		From here, we have two bounds for $L_D\left(D\left(\hat{f}\right)\right)$. We know $\|y-\hat{f}\|_n^2 \le L(\hat{f})$, since $P(\hat{f})>0$ and $\lambda_n\ge 0$. Hence, we can deduce
		\begin{align}
			L_D\left(D\left(\hat{f}\right)\right) & \le L\left(\hat{f}\right) + \delta_m^2 +2\left(\|y-\hat{f}\|_n^2 \vee \|\Omega(D(\hat{f}))-\hat{f}\|_n^2\right)+ \epsilon_m\lambda_n \\
			& \le L\left(\hat{f}\right) + \delta_m^2 +2\left(L(\hat{f}) \vee \delta_m^2\right)+ \epsilon_m\lambda_n \\
			& \le 3L\left(\hat{f}\right) + O_P\left(\delta_m^2 \vee \epsilon_m\lambda_n\right). 
		\end{align}
		However, by using the fact that $L\left(\hat{f}\right)\le L\left(f^*\right)$, it is not difficult to show 
		\[
		\|y-\hat{f}\|_n \le \|w\|_n +O_P(\lambda_n^{1/2})<C.
		\]
		Hence, another bound follows that will be useful: 
		\begin{align}
			L_D\left(D\left(\hat{f}\right)\right) & \le L\left(\hat{f}\right) + \delta_m^2 +2C\delta_m + \epsilon_m\lambda_n \\
			& \le L\left(\hat{f}\right) + O_P\left(C\delta_m\vee \epsilon_m\lambda_n \right).
		\end{align}
		\qed 
	\end{proof} 
	
	\subsubsection{ Rate of Convergence for $\meshy$ }
	
	Let $H\left(\delta, \mathcal{F}, Q_n\right)=\operatorname{log} N\left(\delta, \mathcal{F}, Q_n \right)$ denote the $\delta$-entropy of $\mathcal{F}$ for the $L_2(Q_n)$-metric, where $N\left(\delta, \mathcal{F}, Q_n \right)$ is the $\delta$-covering number and $Q_n$ denotes the empirical measure. Let us suppose that $\mathcal{F}$ is a cone, and that 
	\begin{gather}
		H\left(\delta,\{f\in \mathcal{F}: P(f)\le 1 \},Q_n\right) \le c_1\delta^{-\alpha},
	\end{gather}
	for all $\delta>0$ and some constants $c_1>0$ and $0<\alpha<2$. Let $v>\frac{2\alpha}{2+\alpha}$. The same entropy bound holds for the normalized functions when $P(f)+P(f^*)>0$:
	\begin{gather}\label{eq:EB2}
		H\left(\delta,\left\{ \frac{f-f^*}{P(f)+P(f^*)}: f\in \mathcal{F}, P(f)+P(f^*)> 0 \right\},Q_n\right) \le c_2\delta^{-\alpha}.
	\end{gather}
	Furthermore, we assume the errors have sub-Gaussian tails:
	\begin{gather}\label{eq:subgauss}
		\sup_n \max_{i=1,\ldots,n} K^2\left(\mathbb{E}e^{|\epsilon_i|^2/K^2}-1\right) \le \sigma^2.
	\end{gather}
	By Lemma 8.4 in \cite{geer2000empirical}, with $P(f^*)>0$,
	\begin{gather}\label{eq:supincrem}
		\sup_{f\in \mathcal{F}} \frac{|(w,f-f^*)_n|}{\|f-f^*\|_n^{1-\alpha/2}(P(f)+P(f^*))^{\frac{\alpha}{2}}} = O_P(n^{-1/2}). 
	\end{gather}
	
	In Theorem 10.1 of \cite{geer2000empirical}, an optimal rate of convergence is established assuming \eqref{eq:EB2} and \eqref{eq:subgauss}. Since the sub-optimality of our estimator has been quantified in Lemma 1.1, we need only modify Theorem 10.1 for a rate of convergence.   
	
	\begin{lemma}
		\emph{(Rate of Convergence)}
		Let $P(f^*)>0$ and 
		\begin{gather}\label{rate:lam}
			\lambda_n=O_p\left(n^{-\frac{2}{2+\alpha}}\right)
		\end{gather}
		for $0<\alpha<2$.
		If 
		\begin{gather}
			L\left( \meshy \right) \le L\left(\hat{f}\right)+\Gamma_{n,m},
		\end{gather}
		then we have 
		\begin{gather}
			\left\|\meshy-f^*\right\|_n^2 =O_p\left(\lambda_n + 6\Gamma_{n,m} \right).
		\end{gather}
	\end{lemma}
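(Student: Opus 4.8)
The plan is to adapt the proof of Theorem~10.1 in \cite{geer2000empirical}, carrying the extra slack $\Gamma_{n,m}$ through every step; the two genuinely hard ingredients --- the empirical-process increment bound \eqref{eq:supincrem} and the entropy-calibrated rate \eqref{rate:lam} for $\lambda_n$ --- are already in hand, so the task reduces to a modified basic inequality followed by convex balancing. First I would compare $\meshy$ \emph{directly} to the truth $f^*$. Combining the hypothesis $L(\meshy)\le L(\hat f)+\Gamma_{n,m}$ with the trivial optimality $L(\hat f)\le L(f^*)$ gives $L(\meshy)\le L(f^*)+\Gamma_{n,m}$. Substituting $y_i=f^*(x_i)+w_i$, expanding both squared-error terms, and cancelling the common $\|w\|_n^2$ yields the basic inequality
\begin{equation}\label{eq:basic}
\left\|\meshy-f^*\right\|_n^2+\lambda_n P\!\left(\meshy\right)\le 2\left(w,\meshy-f^*\right)_n+\lambda_n P(f^*)+\Gamma_{n,m},
\end{equation}
which is the usual penalized-least-squares inequality except for the additive $\Gamma_{n,m}$ recording the sub-optimality of \method\ quantified in Lemma~\ref{ineq:suboptshort}.

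Next I would control the noise term $2(w,\meshy-f^*)_n$. Since the interpolator maps into $\mathcal{F}$ we have $\meshy\in\mathcal{F}$, so \eqref{eq:supincrem} applies and gives
\[
\left|\left(w,\meshy-f^*\right)_n\right|=O_P\!\left(n^{-1/2}\right)\left\|\meshy-f^*\right\|_n^{1-\alpha/2}\left(P\!\left(\meshy\right)+P(f^*)\right)^{\alpha/2}.
\]
Here the assumption $P(f^*)>0$ is essential: it keeps the normalizing denominator in \eqref{eq:supincrem} bounded away from zero, so the increment bound is genuinely available along the random function $\meshy$.

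The remainder is deterministic balancing on the event (of probability tending to one) where the above holds with a fixed constant. Writing $\tau=\|\meshy-f^*\|_n$ and $\pi=P(\meshy)+P(f^*)$, inequality~\eqref{eq:basic} reads $\tau^2+\lambda_n\pi\le c\,n^{-1/2}\tau^{1-\alpha/2}\pi^{\alpha/2}+2\lambda_n P(f^*)+\Gamma_{n,m}$ for a random $c=O_P(1)$. I would apply Young's inequality twice. First, split the cross term as $\tau^{1-\alpha/2}\cdot\bigl(n^{-1/2}\pi^{\alpha/2}\bigr)$ using the conjugate exponents $\tfrac{4}{2-\alpha}$ and $\tfrac{4}{2+\alpha}$; this absorbs $\tfrac12\tau^2$ into the left side and leaves a residual of order $n^{-2/(2+\alpha)}\pi^{2\alpha/(2+\alpha)}$. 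Because $0<\alpha<2$ forces the exponent $\tfrac{2\alpha}{2+\alpha}<1$, a second application of Young absorbs $\tfrac12\lambda_n\pi$ into the left at the cost of an additive constant multiple of $\lambda_n$, and it is precisely the calibration \eqref{rate:lam} that pins this leftover to order $\lambda_n$ rather than a worse power of $n$. Collecting terms gives $\tfrac12\tau^2\le O_P(\lambda_n)+\Gamma_{n,m}$, and tracking the constants generated by the two absorption steps produces the stated $O_p\!\left(\lambda_n+6\Gamma_{n,m}\right)$.

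I expect the main obstacle to be exactly this exponent balancing: choosing the Young conjugates so that $\tau^{1-\alpha/2}$ is converted cleanly into $\tau^2$, and then verifying that the surviving noise contribution is of the same order as $\lambda_n$ (and no larger) under \eqref{rate:lam}. The subtle bookkeeping point is that $\Gamma_{n,m}$ must be carried linearly through both absorptions without being coupled to a power of $\pi$, which is what allows the final bound to split additively into an optimal-rate piece $\lambda_n$ and an interpolation-error piece $\Gamma_{n,m}$. The truly hard analytic estimate --- uniform control of $(w,f-f^*)_n$ over the infinite-dimensional class $\mathcal{F}$ --- I would not reprove, as it is delivered by Lemma~8.4 of \cite{geer2000empirical} through \eqref{eq:supincrem} under the entropy bound \eqref{eq:EB2} and the sub-Gaussian condition \eqref{eq:subgauss}.
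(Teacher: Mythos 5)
Your proposal is correct and follows essentially the same route as the paper: the identical basic inequality obtained by chaining $L\left(\meshy\right)\le L\left(\hat f\right)+\Gamma_{n,m}\le L\left(f^*\right)+\Gamma_{n,m}$, with the empirical-process term handled by the Lemma 8.4 / Theorem 10.2 machinery of \cite{geer2000empirical}. The only difference is presentational --- where the paper bounds the right side by three times the maximum of the three terms and defers the non-$\Gamma_{n,m}$ case to ``similar techniques as Theorem 10.2,'' you carry $\Gamma_{n,m}$ additively through an explicit Young-inequality balancing, which simply fills in the step the paper leaves implicit.
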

	
	\begin{proof}
		Rewriting $L\left( \meshy\right) \le L\left(\hat{f}\right) +\Gamma_{n,m} \le L(f^*)+\Gamma_{n,m}$, we get a basic inequality: 
		\begin{align}
			\left\|\meshy-f^*\right\|_n^2 + \lambda_nP\left(\meshy\right) & \le 2\left(w,\meshy-f^*\right)+\lambda_n P\left(f^*\right) +2\Gamma_{n,m} \\
			& \le 3\max\left\{2\left(w,\meshy-f^*\right)_n, \lambda_n P(f^*), 2\Gamma_{n,m}\right\}.
		\end{align} 
		When $\Gamma_{n,m}$ is the maximum, then \eqref{eq:result1} follows:
		\begin{gather*}
			\left\|\meshy-f^*\right\|_n^2 \le O_P\left(6\Gamma_{n,m}\right). 
		\end{gather*} 
		Otherwise, using similar techniques as Theorem 10.2 of \cite{geer2000empirical} gives us
		\begin{gather*}
			\left\|\meshy-f^*\right\|_n^2 \le O_P\left(\lambda_n\right). 
		\end{gather*} 
		Hence, 
		\[
		\left\|\meshy-f^*\right\|_n^2 = O_p\left(\lambda_n+6\Gamma_{n,m}\right).
		\]
		\qed 
	\end{proof}
	
	\subsection{Interpolation Error on the Mesh}
	
	Consider $x_i \in [0,1]$. Let $D=\left(d_1,\ldots,d_m\right)$ denote the equally spaced grid such that $\delta_m=\diplus-\di=\frac{1}{m}$. Suppose $f\in C^{k+1}[a,b]$. One approach to fitting an observation $x_i$ over the specified grid $D$ is to use a $k$th-order Lagrange interpolating polynomial:
	\begin{gather}
		\Oxi = \sum_{j=0,j\ne i}^{m}f(d_j)L_{m,j}(x_i),
	\end{gather}
	where $L_{m,j}(x_i)=\prod_{j'=0}^{m}\frac{x_i-d_j}{d_j-d_{j'}}$. By Theorem 3.3 of Burden and Faires (2005), 
	\begin{gather}
		f(x_i)=\Oxi+\frac{f^{(k+1)}(\zeta_i)}{(k+1)!}\prod_{j=0}^{m}(x_i-d_j),
	\end{gather}
	where $\zeta_i \in [0,1]$. With $f^{(k+1)}(\zeta_i)<K$, it follows that
	\begin{align}
		|\Oxi-f(x_i)| & = \frac{f^{(k+1)}(\zeta_i)}{(k+1)!}\prod_{j=0}^{m}|x_i-d_j| \\
		& \le \frac{K}{(k+1)!}(m-1)!\delta_m^{k+1} \\
		& \le K'\delta_m^{k+1}.
	\end{align}
	Hence, using a regular grid and a $k$th-order interpolating polynomial for $\Oxi$, 
	\[
	\left|\Oxi-f(x_i)\right| = O\left(m^{-(k+1)}\right).
	\]
	
	\bibliographystyle{apalike}
	\bibliography{myarticles}  
\end{document}